\documentclass[11pt,fleqn,letter]{article}

\usepackage{HLT24-Packages}
\usepackage{HLT24-Specific}

\usepackage{amsmath}

\title{\bf Hard CNF Instances for Ideal Proof Systems}

 \author{Tuomas Hakoniemi\thanks{Email: \url{tuomas.hakoniemi@helsinki.fi} This work has been supported by Helsinki Institute for Information Technology (HIIT).} \\ University of Helsinki
 \and 
 Nutan Limaye\thanks{Email: \url{nuli@itu.dk} This project is supported by funding from Independent Research Fund Denmark (grant agreement No. 10.46540/3103-00116B) and by the Basic Algorithms Research Copenhagen (BARC), which is funded by VILLUM Foundation Grant 54451.}\\ IT University of Copenhagen
 \and
 Iddo Tzameret\thanks{Email: \url{iddo.tzameret@gmail.com} This project has received funding from the European Research
Council (ERC) under the European Union’s Horizon 2020 research and innovation programme (grant agreement No 101002742,
EPRICOT project). It was also supported by the Engineering
and Physical Sciences Research Council (EPSRC) under grant
EP/Z534158/1, Integrated Approach to Computational Complexity:
Structure, Self-Reference and Lower Bounds}\\
 Imperial College London}

\usepackage{color}

\begin{document}

\maketitle

%___________________________________________________________
\begin{abstract}
Since the introduction of the Ideal Proof System (IPS) by Grochow and Pitassi (J.~ACM 2018), a substantial body of work has established size lower bounds for IPS and its fragments. In particular, Forbes, Shpilka, Tzameret, and Wigderson (Theory Comput.~2021) developed the main lower-bound frameworks for restricted IPS fragments, namely functional lower bounds and the hard multiples method, while Alekseev, Grigoriev, Hirsch, and Tzameret (SIAM J.~Comput.~2024) gave a general template for conditional lower bounds for full IPS.

Yet all these lower bounds apply only to purely algebraic formulas over a field, that is, non-Boolean formulas not directly expressible in propositional logic. Proving lower bounds for CNF formulas has therefore remained a central open problem in this line of work.

The current work resolves this question for IPS over read-once oblivious algebraic branching programs (roABPs) by proving lower bounds for refutations of CNF formulas in this system. 
Our approach is a rank-based feasible interpolation argument, following the method of Pudl\'ak and Sgall (Proof Complexity and Feasible Arithmetic 1996) 
for monotone span programs, in which decomposing a given roABP refutation along a variable partition yields a low-dimensional space of polynomials from which we construct a span-program interpolant. We extend their result from Nullstellensatz refutations measured by degree to Nullstellensatz refutations measured by roABP size (i.e., roABP-IPS$_\text{LIN}$).
\end{abstract}
%_______________________________________________________

%\newpage

%_______________________________________________________
\section{Introduction}
%_______________________________________________________

This work investigates lower bounds against algebraic proof systems in the framework of the Ideal Proof System ($\IPS$).
Proof complexity studies the size of proofs certifying membership in languages such as UNSAT, the set of unsatisfiable Boolean formulas. In this setting, a proof is an efficiently verifiable witness, and for UNSAT such a proof is usually called a refutation. A central goal of the area is to prove lower bounds for increasingly strong proof systems, with the ultimate aim of showing that no proof system has polynomial-size refutations for all unsatisfiable formulas. This line of work is often called \emph{Cook's programme}, following Cook's proposal that proof complexity lower bounds may shed light on basic complexity-theoretic problems such as $\P$ versus $\NP$. In particular, proving that no proof system efficiently refutes all unsatisfiable formulas would separate $\NP$ from $\coNP$, and hence also $\P$ from $\NP$.

Despite major progress, the main lower bound questions in proof complexity, especially those relating to strong enough proof systems such as textbook propositional logic remain open. Even proving lower bounds against propositional proofs operating with constant-depth formulas with counting modulo $p$ gates (i.e. \ACZ$[p]$-Frege) is open. In order to gain better understanding of the latter proof system, algebraic proof systems have been extensively studied beginning in the 1990s (cf.~\cite{BeameIKPP96}). Moreover, a recent important direction has been to study \emph{algebraic proof complexity}, with the hope to  bring methods from algebraic complexity theory proper into proof complexity (see \cite{PT16} for a survey). Here one ideally aims to shed light on proof systems that goes beyond \ACZ$[p]$-Frege, going up to Frege (i.e. textbook propositional proofs). 

The Ideal Proof System, introduced by Grochow and Pitassi \cite{GP18}, gives a particularly clean formulation of the  connection between algebraic circuits and propositional proofs. IPS can be viewed as a circuit-size version of Nullstellensatz refutations \cite{BeameIKPP96}, and hence as a proof system that directly incorporates algebraic circuit complexity into the proof complexity setting. In this way it gives a transparent reduction from circuit lower bounds to proof lower bounds as shown in \cite{GP18}: an IPS lower bound for a CNF formula implies that the permanent polynomial does not have polynomial-size algebraic circuits, namely, $\VP\neq\VNP$. Conversely, Santhanam and Tzameret~\cite{ST25} established a partial converse, giving further evidence that IPS forms a natural bridge between algebraic complexity and proof complexity. Namely, assuming $\VP \neq \VNP$, they constructed a family of CNF formulas admitting no polynomial-size IPS refutations. A qualification is that the unsatisfiability of this family remains open, although there is evidence supporting it.
The standard formulation of IPS used in this line of work is equivalent to circuit-size Nullstellensatz, and for restricted circuit classes one considers the corresponding fragments such as $\mathcal{C}$-$\lIPS$ (see below for more details).
 
Forbes, Shpilka, Tzameret and Wigderson~\cite{FSTW21} introduced two approaches for turning algebraic circuit lower bounds into lower bounds for $\IPS$: the \emph{functional lower bound} method and the \emph{lower bounds for multiples} method.
Other approaches include the \emph{meta-complexity} approach of~\cite{ST25}, which obtains a (conditional) $\IPS$ size lower bounds on a self-referential statement~(cf.~\cite{LST26}) and the \emph{noncommutative} approach of Li, Tzameret and Wang~\cite{LTW18} (building on \cite{Tza11-I&C}).

At the same time, the intended connection back to propositional logic has not yet been fully realised. All known unconditional lower bounds for IPS and its fragments apply to algebraic instances that are not direct encodings of propositional formulas, and in particular not of CNF formulas. Consequently, even a lower bound for IPS on such an algebraic instance does not by itself give a hard instance for propositional proof systems, such as fragments of Extended Frege: the hard instance is not a propositional formula, and hence is not an object in the usual language of propositional logic.

In fact, even moving beyond the \emph{single-axiom framework} has remained unclear. The lower bounds obtained via the functional lower-bound method are all based on a \emph{single} non-Boolean axiom, typically a variant of the subset-sum equation $\sum_i x_i=\beta$, where $\beta$ is chosen so that the equation has no $0$-$1$ solutions, for example $\beta=-1$. Thus, these results leave open the problem of proving IPS lower bounds for genuinely propositional instances. Andrews and Forbes~\cite{AF22} obtained lower bounds against constant-depth IPS, using the hard multiples method, for an algebraic instance with multiple axioms. However, their result is in the ``placeholder'' model of IPS refutations: the hard instance itself has no polynomial-size constant-depth algebraic representation, and is not
a translation of a propositional formula.
\medskip

The current work resolves this open problem for a natural fragment of IPS, studied in~\cite{FSTW21,HLT24,CGMS25,EGLT26}, by giving the first hard CNF instances in this fragment. We work with IPS refutations in which the certificate polynomials are computed by read-once oblivious algebraic branching programs (roABPs). More precisely, we prove lower bounds for $\roABP$-$\lIPS$ (defined below).

The argument goes back to classical results in propositional proof complexity about feasible interpolation, using an algebraic formulation by Pudl\'ak-Sgall~\cite{PS96-span-programs}.
To explain our result and setup we need to start by giving more details about algebraic proof systems and the feasible interpolation method in proof complexity.

\para{Algebraic proof systems.}

Algebraic proof systems certify that a given set of multivariate polynomials over a field has no common Boolean solutions.
Some of the basic proof systems in this line are the Polynomial Calculus (PC) \cite{CEI96} and its `static' variant, Nullstellensatz \cite{BeameIKPP96}.
In PC, proofs proceed by algebraic manipulation, adding and multiplying polynomials, until deriving the contradiction $1 = 0$.
Contrastingly, in Nullstellensatz,  a proof of the unsatisfiability of a set of axioms, written as polynomial equations $\{f_i(\vx)=0\}$ over a field, is a \emph{single} formal polynomial identity expressing $1$ as a combination of the axioms, that is:
\begin{equation}
\label{eq:NS-first}
\sum_i g_i(\vx)\cdot f_i(\vx) = 1\,,
\end{equation}
for some polynomials $\{g_i(\vx)\}$.
These systems measure proof size by sparsity, defined as the total number of monomials involved, which makes them comparatively weak.
An alternative way to measure proof size is by  algebraic circuit size. This was suggested initially by Pitassi \cite{Pit97,Pit98}, and further investigated in the work of Grigoriev and Hirsch \cite{GH03} and subsequently Raz and Tzameret \cite{RT06,RT07}, eventually leading to the Ideal Proof System \cite{GP18} described in what follows.

\para{Ideal Proof System.}
The Ideal Proof System ($\IPS$ for short; \Cref{def:IPS}), introduced by Grochow and Pitassi \cite{GP18}, loosely speaking, is the Nullstellensatz proof system where the polynomials $g_i(\bar{x})$ in \Cref{eq:NS-first} are represented by algebraic circuits. Forbes, Shpilka, Tzameret and Wigderson~\cite{FSTW21} showed that these two representations are formally equivalent.
%showed that $\IPS$ is equivalent to Nullstellensatz %in which the polynomials $g_i$ in  \Cref{eq:NS-first} %are written as algebraic circuits. 
In other words, an $\IPS$ refutation of the set of axioms $\left\{f_i(\bar{x})=0\right\}_i$ can be defined similarly to  \Cref{eq:NS-first}:
\begin{equation}
\label{eq:intro:IPS}
\sum_i g_i(\bar{x}) \cdot f_i(\bar{x})+\sum_j h_j(\bar{x}) \cdot\left(x_j^2-x_j\right)=1,
\end{equation}
for some polynomials $\left\{g_i(\bar{x})\right\}_i$, where we think of the polynomials $g_i, h_j$ written as algebraic circuits (instead of e.g., counting the number of monomials they contain); and where $x_j^2-x_j$ are the set of \emph{Boolean axioms}, forcing every solution to the equations $\{x_j^2-x_j=0 \;:\; j\in [n]\}$ to 0-1 values.
Thus, the size of the $\IPS$ refutation in \Cref{eq:intro:IPS} is $\sum_i \operatorname{size}\left(g_i(\bar{x})\right)+\sum_j \operatorname{size}\left(h_j(\bar{x})\right)$, where $\operatorname{size}(g)$ stands for the minimal size of an algebraic circuit computing the polynomial $g$.

Note that IPS is \emph{not} necessarily a Cook-Reckhow propositional proof system, in the sense that there is no known deterministic polynomial-time algorithm to check whether a refutation is a correct refutation. There reason is that to verify \Cref{eq:intro:IPS} we need to preform polynomial identity testing (PIT for short), which is a problem in $\coRP\subseteq\BPP$, not known to be in $\P$. 

It is natural to consider IPS refutations where 
the polynomials $g_i(\vx)$ and $h_j(\vx)$ in \Cref{eq:intro:IPS} are written as algebraic circuits from a prescribed circuit class $\mathcal{C}$, for example constant-depth algebraic circuits. However, when considering $\mathcal{C}$ weaker than general algebraic circuits, one has to be a bit careful with the definition of $\IPS$. For technical reasons the formalization in \Cref{eq:intro:IPS} does not capture the precise definition of $\IPS$ restricted to the relevant circuit class, rather the fragment which is denoted by $\mathcal{C}$-$\lIPS$ (``LIN" here stands 
for the linearity of the axioms $f_i$ and the Boolean axioms; that is, they appear as polynomials $f_i^d$ with power $d=1$). In this work, we focus on $\mathcal{C}$-$\lIPS$. Namely, refutations in the system $\mathcal{C}$-$\lIPS$ are defined as in    \Cref{eq:intro:IPS} where the polynomials $g_i(\vx), h_j(\vx)$ are written as circuits in the circuit class $\mathcal{C}$. In particular we shall consider $\mathcal{C}$ to be the class of roABPs described in what follows.

\para{Read-once oblivious algebraic branching programs.}
A read-once oblivious algebraic branching program (denoted, roABP) is a layered directed graph with two special vertices, the source in the first layer and the sink in the last layer, and $n - 1$ layers in between, and where edges are directed from source to sink. All nodes in layer $i$ are connected with edges only to nodes in layer $i+1$. The edges between layer $i$ and layer $i + 1$ are labeled with univariate polynomials in the variable $x_{i}$ (i.e., in each layer all polynomials have the same (single) variable). Each source-to-node  path computes the polynomial that is the product of all the univariate polynomials on the path, and the polynomial computed at each node is the sum of all paths incoming into the node. The program itself computes the polynomial computed at the sink node. 
The \emph{width} of an roABP is the maximum number of nodes in any layer.

Note that the order in which variables are read in a roABP is fixed. In the above description we made the order to be from $x_1$ to $x_n$, namely,  by increasing index order, but equivalently we can fix any other linear (i.e., total) order.

The proof system roABP-\lIPS\ defines refutations as in \Cref{eq:intro:IPS} in which $g_i(\vx)$ and $h_j(\vx)$ are written as roABPs. This is an interesting  system, because unlike IPS, roABP-\lIPS\ is known to be a Cook-Reckhow proof system: deterministic polynomial-time PIT algorithms for roABPs are known (cf.~Raz and Shpilka~\cite{RS04}). Moreover, roABP-\lIPS\ is a test case for lower bounds against IPS fragments, because of the relatively simple nature of roABPs, and the fact that lower bounds against this circuit class are well understood. 

\para{Feasible interpolation.}

Let $\varphi_0(\vx,\vz)\wedge\varphi_1(\vy,\vz)$ be an unsatisfiable propositional formula in pairwise disjoint sequences of variables $\vx$, $\vy$ and $\vz$. 
It follows that for every given total assignment to $\vz$, either $\varphi_0(\vx,\vz)$ or $\varphi_1(\vy,\vz)$ is unsatisfiable (or both are); this is because, a single $\vz$-assignment for which both $\varphi_0(\vx,\vz)$ and $\varphi_1(\vy,\vz)$ are satisfiable, could be merged into a satisfying assignment to $\varphi_0(\vx,\vz)\land\varphi_1(\vy,\vz)$.

The basic interpolation property for propositional logic, analogous to Craig interpolation in first-order logic, guarantees the existence of a function $I$ defined on the $\vz$-variables such that:
\[
I(\valpha) =
\begin{cases}
    0, \text{ if }\varphi_0(\vx,\valpha)\text{ is satisfiable;}\\
    1, \text{ if }\varphi_1(\vy,\valpha)\text{ is satisfiable.}
\end{cases}
\]

Such a function $I$ is called \emph{an interpolant} for the conjunction $\varphi_0(\vx,\vz)\wedge\varphi_1(\vy,\vz)$. The function is well-defined, i.e. single-valued, by the assumption that the conjunction is unsatisfiable as explained above.
In a similar manner, we can consider feasible interpolation for sets of polynomial equations instead of propositional formula(s) (see bottom of \Cref{sec:perlim:feasible-interpolation}).

In the framework of \emph{feasible interpolation} we are interested in the computational complexity of the interpolants in terms of the sizes of the refutations of the propositional formulas. As such the framework is a method to translate computational lower bounds to proof size lower bounds.
See \Cref{sec:conn-to-previous-work} for more on the history of feasible interpolation.

\para{Span programs.}
A span program \cite{KW93} consists of a labeled matrix $(M,\sigma)$ together with a target vector $t$, where $\sigma$ is a labeling of the rows, i.e., a mapping from the rows of $M$ to the literals in the variables $\vz$. We also allow labeling a row with the constant $1$. A span program is monotone, whenever there is no row labeled with a negative literal. The \emph{size} of the span program is the number of rows (irrespective of the number of columns).

Equivalently, one may define span programs without choosing coordinates.
In this formulation, a span program consists of a finite set $U$ of pairs
$(\ell,v)$, where $\ell$ is a $\vz$-literal or the constant $1$, and
$v$ is a vector in some vector space $V$, together with a target vector
$t\in V$. This is equivalent to the matrix formulation: a matrix
representation is obtained by choosing a basis of $V$ and writing the
vectors $v$ as coordinate rows, while conversely the rows of a labeled
matrix are simply labeled vectors in the ambient coordinate space. We use
this coordinate-free formulation below because our vectors will naturally
be polynomials, and no particular choice of monomial basis will be relevant.

A span program as above \emph{computes} a Boolean function as follows: on a 0-1 assignment $a$ to the variables $\vz$, let $U(a)$ be the set of vectors $v$ so that there is some literal $\ell$ with $(\ell,v)\in U$ and $\ell(a) = 1$. In other words, $U(a)$ consists of all vectors picked by the assignment $\vz$, namely those vectors whose $\vz$-label gets 1 under $a$. The span program outputs $1$ if the target vector $t$ is in the span of the vectors in $U(a)$ and $0$ otherwise.

%%%%%%%%%%%%%%%%%%%%%%%%%%%%%%%%
\subsection{Our Results}
%%%%%%%%%%%%%%%%%%%%%%%%%%%%%%%%

Our main result  is an exponential lower bound against roABP-$\lIPS$ for a family of CNF formulas, for any variable order and over any field. Namely, given the CNF formula, no roABP-\lIPS\ certificate of any variable order and sub-exponential size exists.

\begin{theorem}[Main lower bound (informal); see \Cref{thm:main-final-lower-bound}]
\label{intro:thm:main-final-lower-bound}
There is a family of CNF formulas $\Psi_n$ in $\poly(n)$ variables and $\poly(n)$ clauses that requires \textup{roABP}-$\lIPS$ refutations of size $2^{n^{\Omega(1)}}$ in any variable ordering, and over any field.
\end{theorem}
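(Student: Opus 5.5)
The plan is a feasible-interpolation argument that turns a small roABP-\lIPS\ refutation into a small monotone span program, followed by an appeal to known exponential lower bounds for monotone span programs. Fix a split formula $\Psi_n=\varphi_0(\vx,\vz)\wedge\varphi_1(\vy,\vz)$ in which $\vz$ occurs only positively in $\varphi_0$ (negatively in $\varphi_1$, depending on convention). We want any interpolant to be a monotone function whose monotone span program complexity is $2^{n^{\Omega(1)}}$ over every field. A natural choice is the clique--coloring encoding, or better a CNF encoding of a monotone function known to require exponential monotone span programs over every field. Examples are the Babai--G\'al--Wigderson bounds, or the $2^{n^{\Omega(1)}}$ bounds of Pitassi--Robere and Cavalar et al.\ for a lifted function such as $\mathsf{GEN}$ or a perfect-matching variant. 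We encode the function as a monotone CNF with auxiliary variables, so that $\varphi_0$ says ``$\vz$ is accepted (witness $\vx$)'' and $\varphi_1$ says ``$\vz$ is rejected (witness $\vy$)''. Both are polynomial-size CNFs, and we translate them to polynomial axioms in the standard way.

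The core step is to show that a roABP-\lIPS\ refutation of size $s$ yields a monotone span program of size $\poly(s,n)$ computing an interpolant. We carry out the following steps.
\begin{enumerate}
\item Handle the variable order. Since the order is arbitrary, first pad or permute the instance. For example, take many copies of the hard formula on disjoint variables, or use a suitable symmetric encoding, so that for every order some copy has its $\vx$-variables all before its $\vy$-variables, with $\vz$ placed between or handled via a restriction. Restrict the other copies by satisfying assignments, which preserves roABP size.
\item Decompose along the cut. With $\vx$ before $\vy$, each $g_i,h_j$ of width $w$ can be written as $\sum_{k\le w} a_k(\vx,\vz)b_k(\vy,\vz)$. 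One should rather cut so that $\vz$ is split too, or treat $\vz$ as constants after substitution, which is the more flexible option.
\item Group and substitute. Collect the $\varphi_0$-axiom terms of the identity $\sum g_if_i+\dots=1$. Under any $\vz$-assignment $\alpha$, these terms give a polynomial $P_\alpha(\vx,\vy)$ lying in the $\vx$-ideal tensored with a $w$-dimensional space $W$ of $\vy$-polynomials. Symmetrically, $1-P_\alpha$ lies in the $\vy$-ideal part.
\item Build the span program. Following Pudl\'ak--Sgall, the vectors are elements of a space of dimension $\poly(s)$: the $\vy$-side coefficient polynomials $b_k$, labeled by the $\vz$-literals that their axioms depend on, and reduced modulo the Boolean axioms. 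The target is the constant polynomial $1$, or its image.
\item Prove correctness in two directions. If $\varphi_1(\vy,\alpha)$ is satisfiable by $\beta$, evaluating at $\vy=\beta$ kills the $\varphi_1$ part. Then $1$ is expressed through $\varphi_0$ terms, whose $\vz$-dependence selects exactly the rows active under $\alpha$, so the target is spanned. If instead $\varphi_0(\vx,\alpha)$ is satisfiable by $\gamma$, evaluating at $\vx=\gamma$ gives a linear functional that vanishes on all active rows but not on the target.
\end{enumerate}

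Monotonicity comes from the fact that $\vz$ appears only positively in the relevant axioms, so only positive literals label rows. This gives a span program of size $\poly(s)$. Combined with the chosen monotone span program lower bound, it yields $s\ge 2^{n^{\Omega(1)}}$ over every field.

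The main obstacle is step 4, namely controlling how $\vz$ enters. In the roABP the $\vz$-variables are interleaved in the order, and the coefficient spaces must be shown low-dimensional uniformly in $\alpha$, with labels that are single literals. I would first try to make each $\vz$-variable occur in the axioms only linearly and multiplicatively in clauses of the form $z_e\to(\dots)$. I would then multilinearize in $\vz$ and absorb each $z_e$ factor as a row label, using the roABP width bound on each partial-derivative space. A second, lesser difficulty is the ``any order'' requirement. There I would rely on the copying/restriction trick, which needs the hard function to be robust under restrictions to a sub-instance.
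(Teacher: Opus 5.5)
Your architecture matches the paper's: monotone interpolation into span programs, $\GEN_n$ with the Pitassi--Robere bound, then a restriction argument for arbitrary orders. However, two steps fail as written. The first is the any-order step. Disjoint copies of the hard formula cannot work, because each copy's variable roles are fixed in advance. An adversarial order can therefore put, in every copy at once, all $\vy$- and $\vz$-variables before the $\vx$-variables, so no copy is in a usable order. Worse, if the copies are simply conjoined, each one is unsatisfiable, so there is no satisfying assignment with which to ``restrict the other copies away''. Your alternative, a ``suitable symmetric encoding'', is the right direction, but it is the heart of the matter and has to be built. You need a lift on \emph{shared} variables in which the roles are assigned only after the order is seen. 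The paper uses one block $v_1,\dots,v_N$ and selectors $u_{C,i,j,k}$ with clauses $\neg u_{C,i,j,k}\vee v_i^{\mathrm{sg}(\ell_1)}\vee v_j^{\mathrm{sg}(\ell_2)}\vee v_k^{\mathrm{sg}(\ell_3)}$. It adds a consistency CNF $\Phi_2(\vu)$: exactly one triple per clause, plus pairwise compatibility of the induced partial maps. This forces the selected local realisations to come from one global bijection. For any order, set $\vu$ according to the induced order on $\vv$. This satisfies $\Phi_2$ and turns $\Phi_1$ into exactly $\phi$ with its variables in the hard order, and constant substitution plus renaming does not increase roABP width.

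The second failure is in the interpolation step, exactly where you locate the main obstacle; the construction must be made rigid there. The span-program rows must be independent of $\alpha$, so the pieces of the cut must be $\vz$-free. Treating $\vz$ as constants after substitution gives $\alpha$-dependent vectors. Multilinearising in $\vz$ gives rows labelled by $\vz$-monomials rather than single literals. The fix is to work in an order where $\vx$ precedes all of $\vz\cup\vy$, which the lift supplies for every order. Cut right after the last $\vx$-layer, so that $a_p=\sum_{i\le w}a_{p,i}(\vx)\,r_i(\vy,\vz)$. Your step 4 is also inconsistent with your step 5. If the vectors are the $\vy$-side pieces $b_k$, then setting $\vy=\beta$ makes them scalars, and ``evaluate at $\vx=\gamma$'' is not a functional on them. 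The rows must be the $\vx$-side pieces times the $\varphi_0$-axioms. For $p=p'+z_jp''$ these are $(z_j,a_{p,i}(p'+p''))$ and $(\neg z_j,a_{p,i}p')$; for $\vz$-free $p$ they are $(1,a_{p,i}p)$; the target is $1$. With the paper's encoding of $\GEN_n$, the only $\vz$-clauses are $\neg z_i\vee x_g$. Their translations $z_i(1-x_g)$ are already in monotone normal form, so only positive labels survive. With these two repairs your step 5 is exactly the paper's correctness argument. Separately, the Babai--G\'al--Wigderson bounds are only superpolynomial, not $2^{n^{\Omega(1)}}$, so $\GEN_n$ is the choice that actually gives the claimed bound.
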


This resolves a problem left open by previous works on \IPS\ lower bounds~\cite{FSTW21,HLT24,For24,And25,EGLT26}; see also the discussion in Andrews~\cite[Sec.~1.3]{And25}. Namely, whether one can prove an \IPS\ lower bound for an instance that is not a single polynomial equation and, more importantly, for a genuinely Boolean instance: a direct arithmetization of a propositional formula, specifically a CNF formula.

The significance of this result is twofold. First, CNF formulas are the standard benchmark instances in proof complexity. Second, as discussed above, they provide a direct connection to Frege-style propositional proof systems: a lower bound for $\mathcal C$-IPS on CNFs immediately implies the same lower bound for any propositional proof system simulated by $\mathcal C$-IPS. This implication is unavailable for purely algebraic instances, such as subset-sum equations of the form $\sum_i a_i x_i-\beta$ and their variants.

For example, a lower bound against constant-depth IPS over $\F_p$ for a CNF would imply a lower bound for $\ACZ[p]$-Frege, a longstanding open problem. Although recent work of Elbaz \textit{et al.}~\cite{EGLT26} shows that even non-Boolean, namely purely algebraic, instances that are hard for constant-depth IPS over finite fields imply a CNF lower bound via a general translation lemma, it remains important to develop \emph{direct} proof-size lower-bound methods for CNF formulas, in the hope that such methods may eventually yield lower bounds against constant-depth IPS refutations.

\bigskip

The family of hard CNF formulas $\Psi$ is based on a lifting of the instance $\GEN_n$ introduced in Raz and McKensize~\cite{RazMcKenzie99} (\Cref{def:GEN}) and used throughout circuit and proof complexity (see in particular Pitassi-Robere~\cite{PitassiRobere2018}, Chan-Potechin~\cite{ChanPotechin2014} and  Robere, Pitassi,  Rossman, and  Cook \cite{robere_exponential_2016}).  

\smallskip
The  proof of \Cref{intro:thm:main-final-lower-bound} has three main steps:
\begin{enumerate}
\item 
Interpolation;
 
 \item Hard monotone instance;  
 
 \item Lifting.
\end{enumerate}

We describe these three steps in what follows.

\para{Interpolation.} We establish a general feasible interpolation result as follows. Let \(\Psi=\phi_0(\vx,\vz)\wedge \phi_1(\vy,\vz)\) be  any split formula (with $P_0$ being a set of sparse polynomials; e.g., when $\Psi$ is a $k$-CNF, for a small $k$). From a short roABP-$\lIPS$ refutation with a fixed variable order, refuting $\Psi$, we show how to extract a small span program over the input $\vz$ variables  separating inputs for which $\phi_0(\vx,\vz)$ is satisfiable from those inputs for which $\phi_1(\vy,\vz)$ is satisfiable.
When the $\vz$ variables are all positive in $\phi_0$ or all negative in $\phi_1$ we extract a \emph{monotone} span program, hence we call it ``monotone feasible interpolation''.

\begin{theorem}[Feasible interpolation for roABP-$\lIPS$; see \Cref{thm:feasible-interpolation}]
\label{intro:thm:feasible-interpolation}
    Let $P_0(\vx,\vz)$ and $P_1(\vy,\vz)$ be sets of polynomial equalities, where $\vx,\vy$ and $\vz$ are pairwise disjoint variables and $P_0$ is a set of sparse polynomials of polynomial degree (e.g.,~a translation of $k$-CNF, for a constant  $k$). Suppose that $P_0\cup P_1$ has an \textup{roABP}-$\lIPS$ refutation of width $w$ in a variable ordering, where $\vx$ variables precede all other variables. Then there is a span program of size $\poly(w\cdot |P_0(\vx,\vz)|)$ that computes an interpolant for $P_0$ and $P_1$. Moreover, if all the polynomials in $P_0$ are monotone in $\vz$ (see \Cref{eq:monotone}), then the span program is monotone. 
\end{theorem}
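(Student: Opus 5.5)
We follow Pudlák--Sgall: split the refutation along the cut between the $\vx$-block and the rest, and use the width of the roABP at that cut to bound the dimension of a space of polynomials in $\vx$ that will serve as the span program's vector space.

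\textbf{Step 1 (write the refutation as a low-rank identity).} Write the refutation as
\[\sum_{f\in P_0} g_f\, f + \sum_{h\in P_1} g_h\, h + \sum_{x_i\in\vx} a_i(x_i^2-x_i) + \sum_{v\in \vy\cup\vz} b_v (v^2-v) = 1 .\]
Since the $\vx$ variables come first in the order, each roABP $g$ of width $w$ decomposes across the cut as
\[g=\sum_{j\le w} p_j(\vx)\, q_j(\vy,\vz).\]
Fix a Boolean $\valpha$ for $\vz$ and substitute it. Let $W$ be the span of all the $p_j(\vx)$, taken over all certificates $g_h$ ($h\in P_1$) and $b_v$, reduced modulo the $\vx$-Boolean ideal. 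Then $\dim W\le w\cdot(\text{number of roABPs})$.

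\textbf{Step 2 (choose the span program).} Use the quotient space $V=\F[\vx]/(x_i^2-x_i)$, i.e.\ multilinear polynomials in $\vx$, modulo $W$. Project everything into the finite-dimensional space $V/W$; to keep the size polynomial, use only the rows below and not all of $V$. The target is $t=1$.

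For each $f\in P_0$, expand $f$ in $\vz$ as $f=\sum_m c_{f,m}(\vx)\, m(\vz)$ over its sparse monomials in $\vz$. Since $g_f$ also splits as $\sum_j p_{f,j}(\vx)\,q_{f,j}(\vy,\vz)$, we include rows $(\ell,\, p_{f,j}\,c_{f,m} \bmod W)$, where the label $\ell$ is a literal forcing $m(\valpha)\neq 0$. For a multi-literal monomial, substitute that monomial by an auxiliary construction: in the monotone case a monomial is a conjunction of positive literals. We handle this by factoring $f$'s $\vz$-dependence literal by literal, which sparsity and degree allow to within a $\poly(|P_0|)$ blowup. Rows labelled $1$ carry the $\vx$-Boolean terms.

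\textbf{Step 3 (correctness).} If $P_1(\vy,\valpha)$ is satisfiable by some $\vbeta$, substitute $\vbeta,\valpha$ into the identity. The $P_1$ terms and the $\vy,\vz$-Boolean terms vanish, leaving
\[1=\sum_f g_f(\vx,\vbeta,\valpha)\,f(\vx,\valpha)\]
modulo the $\vx$-Boolean ideal. This expresses $1$ in the span of the rows selected by $\valpha$, with scalars $q_{f,j}(\vbeta,\valpha)$, so the program outputs $1$.

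If instead $P_0(\vx,\valpha)$ is satisfiable by some $\vgamma$, we need a linear functional on $V/W$ that kills every selected row but not $t$; this shows $t$ is outside their span. The natural candidate is evaluation at $\vgamma$. It kills selected rows, since $f(\vgamma,\valpha)=0$ and the nonselected monomials vanish at $\valpha$. The obstacle is that it must also vanish on $W$, and it need not. This is the main difficulty.

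Following Pudlák--Sgall, I would therefore redefine things by duality. Rather than quotienting by $W$, let the vectors live in the dual: for each partial assignment, the functional $\vx$-evaluation composed with the $q$-side. Equivalently, swap roles and put the span-program vectors on the $(\vy,\vz)$ side, so that for $\vgamma$ the identity gives
\[1=\sum_h g_h(\vgamma,\vy,\valpha)\,h+\text{Boolean terms},\]
which is a refutation of $P_1(\vy,\valpha)$ lying in a space of dimension at most $w$.

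Concretely, for each $\vx$-assignment the reduced certificate lies in $\mathrm{span}\{q_j\}$. So I would try defining the separating functional as the linear map sending $p(\vx)\mapsto p(\vgamma)$ restricted to the $O(w|P_0|)$-dimensional space spanned by the $p_{f,j}c_{f,m}$ and $1$. I would then check, using the identity, that evaluation at $\vgamma$ is consistent. This works because any linear dependency among the selected rows that produces $1$ would, after evaluating at $\vgamma$, give $1=0$.

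With this, the dimension of the ambient space is irrelevant and the number of rows is $\poly(w\,|P_0|)$. The monotone case follows because all labels are positive literals when $P_0$ is monotone in $\vz$.
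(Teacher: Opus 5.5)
Your cut decomposition and the completeness half (if $P_1(\vy,\valpha)$ is satisfiable then $1$ lies in the span of the selected rows) are fine. The soundness half, however, fails for the rows you chose.

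Each of your rows $p_{f,j}c_{f,m}$ carries a single $\vz$-monomial coefficient of $f$. A satisfying $\vgamma$ of $P_0(\vx,\valpha)$ only kills the combination $\sum_m c_{f,m}(\vgamma)m(\valpha)=f(\vgamma,\valpha)$, not the individual selected rows. So your claim that evaluation at $\vgamma$ kills every selected row is false as soon as two $\vz$-monomials of $f$ survive under $\valpha$.

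Concretely, take $P_0=\{1-z_1\}$, $P_1=\{z_1\}$ and the width-$1$ refutation $1\cdot(1-z_1)+1\cdot z_1=1$. The only interpolant is $\neg z_1$. Your rows are $(1,1)$ and $(z_1,-1)$, so your program outputs $1$ on every input. This is unavoidable in your setup: all your labels are positive literals or $1$, so your program is always monotone, whereas interpolants for non-monotone $P_0$ need not be.

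The quotient by $W$ only makes matters worse: in this example $1\in W$, so $t=0$ in $V/W$. The duality discussion that follows does not yield a construction. Once $W$ is dropped, soundness is just the observation that a linear combination of selected rows equal to $1$ is a Nullstellensatz refutation of $P_0(\vx,\valpha)$. That, however, requires every selected row to be a polynomial multiple of the full restriction $f(\vx,\valpha)$.

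The missing idea is the $\vz$-normal form, which makes the restriction of each axiom depend on a single literal. Replace $P_0$ by an equisatisfiable $P_0'$ in which every axiom is $p'+z_jp''$ with $p',p''$ free of $\vz$. For example, substitute fresh $w_i$ for $z_i$ and add $z_i-w_i$; in the monotone case, replace $(\prod_{i\in I}z_i)p'$ by $p'+\sum_{i\in I}w_{p,i}$ and $z_iw_{p,i}$. Place the new $\vw$-variables on the $\vx$-side of the cut. Then re-derive each original axiom from $P_0'$ with small-width coefficients, so the refutation transfers with only polynomial width blow-up in that ordering.

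Now write $a_f=\sum_{i\le w}a_{f,i}(\vx,\vw)r_i(\vy,\vz)$ and take rows $(z_j,a_{f,i}(p'+p''))$ and $(\neg z_j,a_{f,i}p')$, plus $(1,a_{f,i}p')$ for $\vz$-free axioms. Each selected row is then a multiple of $f(\vx,\valpha)$, so both directions become immediate. In monotone normal form $p'=0$, so the negatively labelled rows vanish and the program is monotone.

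Your ``factoring literal by literal'' gestures at this. Without the auxiliary variables on the $\vx$-side and the transfer of the refutation, though, it is not a proof even in the monotone case with $|I|>1$.
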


\noindent \textit{Proof idea:} The proof  has one central idea. Start with an roABP-$\lIPS$ refutation
\[
\sum_{p\in P_0} a_p p+\sum_{q\in P_1} b_q q=1
\]
in a variable order where all \(\vx\)-variables come first. First, put \(P_0\) into \(\vz\)-normal form, so each polynomial is either \(p_0(\vx)\) or \(p_0(\vx)+z_jp_1(\vx)\), for some polynomials $p_0(\vx), p_1(\vx)$ in the displayed variables \vx\ only. Then use the roABP structure: since each \(a_p\) is computed by a width-\(w\) roABP and all \(\vx\)-variables  appear first, cutting the roABP right after the last \(\vx\)-layer gives
\[
a_p=\sum_{i\in[w]} a_{p,i}(\vx)\,r_i(\vy,\vz).
\]
The span program is built from the \(\vx\)-parts \(a_{p,i}\): for each \(p_0(\vx)+z_jp_1(\vx)\), add vectors labeled by \(z_j\) and \(1-z_j\), and for each \(p_0(\vx)\), add a constant-labeled vector. The target vector is \(1\).

Why does this work? If the span program outputs \(1\) on an assignment \(\valpha\) to \(\vz\), then the selected vectors already express \(1\) as a linear combination of the specialized polynomials from \(P_0(\vx,\valpha)\), so \(P_0(\vx,\valpha)\) is unsatisfiable. Conversely, if it outputs \(0\) but \(P_1(\vy,\valpha)\) were satisfiable, plugging a satisfying \(\vy\)-assignment into the original refutation would kill the \(P_1\)-part and leave a refutation of \(P_0(\vx,\valpha)\); by the decomposition above, this refutation is again a linear combination of the selected vectors, which contradicts the fact that the span program outputs 0. Thus the span program computes an interpolant. If \(P_0\) is monotone in \(\vz\), the construction is monotone as well. 

\para{Hard monotone instance.}
To make use of the above feasible monotone interpolation in order to establish an actual proof size lower bound we need to find a monotone function that we know is hard for monotone span programs. Moreover, we must be able to find a suitable polynomial-size split formula $\Psi$, of which its interpolant is the hard monotone function. The idea is to use the function $\GEN_n$ that was shown hard for monotone span programs by Robere, Pitassi, Rossman and Cook~\cite{robere_exponential_2016,PitassiRobere2018}. Unlike \cite{robere_exponential_2016,PitassiRobere2018},
 we do not need to insist on a function that is computable by  small monotone circuits, rather  a general (non-monotone) circuit suffices. Hence, we can use the general function $\GEN_n$ instead of the pyramid-based one in the previous works (see \Cref{sec:fixed-order}). 

To form the split formula $\Psi=\phi_0(\vx,\vz)\wedge \phi_1(\vy,\vz)$ whose interpolant is the hard instance for monotone span programs we define it as follows. 
The formula  \(\phi_0(\vx,\vz)\) describes the negative instances of \(\GEN_n\) and \(\phi_1(\vy,\vz)\) describes its positive instances. Also, because $\GEN_n$ is monotone, the associated formulas can be constructed  to be monotone in a sense that makes monotone feasible interpolation applicable. Hence the interpolant extracted from a small refutation 
is a a small monotone span program separating no-instances from yes-instances of \(\GEN_n\), shown hard in \cite{robere_exponential_2016,PitassiRobere2018}.

\para{Lifting.} In the previous step we obtained a split formula that is hard for a specific order of variables (when the \vx-variables come before the other variables). This does not rule out the possibility of a small refutation for a different variable ordering. To rule this out, we apply a ``lifting'' process on the split formula $\Psi$ that, roughly, embeds all possible variable orders (see~\cite{FSTW21} and subsequent work on lifting in IPS; and cf.~\cite{RGR22} for general lifting in proof complexity). 
We explain it informally in what follows. 

Our aim is to remove the dependence on a fixed variable order. In the previous step the hard CNF
$\phi=\phi_0\wedge\phi_1
$
is only shown to be hard for roABP-$\lIPS$ refutations when the variables are read in one specific order, say $x_1<\cdots<x_N$. Lifting constructs a larger CNF formula 
\[
\Psi(\vv,\vu)=\Phi_1(\vv,\vu)\wedge\Phi_2(\vu)
\]
such that, for every variable order $<$ on the variables of $\Psi$, there is a substitution $\rho_{<}$ with the following effect: $\rho_{<}$ assigns Boolean values to the auxiliary variables $\vu$, relabels the $\vv$-variables by $x_1,\dots,x_N$ according to their order under $<$, and after this substitution the lifted clause part $\Phi_1$ becomes exactly the original hard CNF formula $\phi$, while the ``consistency part'' $\Phi_2$ (see below) is satisfied. Thus $\phi$ appears as a restricted instance of $\Psi$. Consequently, any roABP-$\lIPS$ refutation of $\Psi$ in order $<$ yields, under $\rho_{<}$, a refutation of $\phi$ in its hard order.

The formula $\Phi_1(\vv,\vu)$ is the actual lift of $\phi$. It uses new variables $\vu$ and $\vv$. The $\vv$-variables are the new main variables, whose order inside the refutation may be arbitrary. The $\vu$-variables are selector variables. For each clause $C=(\ell_1\vee \ell_2\vee \ell_3)$ of $\phi$, and for each triple $(i,j,k)$ of distinct indices, the variable $u_{C,i,j,k}$ says that the first, second, and third literals of $C$ are realised on $v_i,v_j,v_k$, respectively, with the same signs as in $C$ (``realised'' in the sense that there will be a relabelling of $\vv$-variables by $\vx$-variables to recover the original instance). This is enforced by the implication
\[
u_{C,i,j,k}\to \bigl(v_i^{\mathrm{sg}(\ell_1)}\vee v_j^{\mathrm{sg}(\ell_2)}\vee v_k^{\mathrm{sg}(\ell_3)}\bigr),
\]
together with clauses saying that for each original clause $C$, exactly one such triple is chosen. Hence $\Phi_1$ embeds each clause of $\phi$ locally into the $\vv$-variables.

However, these local choices could be inconsistent across different clauses. The same original variable might correspond to different $\vv$-variables in different places. The role of $\Phi_2$ is to prevent this. It is an auxiliary consistency gadget, not part of the original split $\phi_0\wedge\phi_1$. It checks that the selected $\vu$-variables are globally consistent, meaning that local correspondences for each original clause give rise to a global one-to-one correspondence between the original variables $\vx$ and the new variables $\vv$.
% whenever two clauses of $\phi$ share a variable, their chosen triples send that variable to the same $v_i$. 
% This condition is encoded by a polynomial-size CNF using fresh $\vw$-variables. 
Thus $\Phi_2$ is satisfiable exactly when the local choices in $\Phi_1$ come from one single global relabelling.

Now fix an arbitrary variable order $<$ on $\Psi$, and let the induced order on the $\vv$-variables be
\[
v_{i_1}<v_{i_2}<\cdots<v_{i_N}.
\]
This order determines the restriction $\rho_{<}$. First, relabel $v_{i_k}\mapsto x_k$ for each $k\in[N]$. Next, for each clause $C$ of $\phi$, set exactly one selector variable $u_{C,i,j,k}$ to $1$, namely the one corresponding to the positions of the three variables of $C$ in this $\vv$-order, and set all other selector variables for $C$ to $0$. Since this choice comes from one global relabelling, it is automatically globally consistent, and therefore $\Phi_2$ can be satisfied by a suitable assignment to the $\vu$-variables. After this restriction, the auxiliary gadget disappears and the lifted formula reduces to the original hard CNF $\phi$, with the surviving variables now ordered as $x_1<\cdots<x_N$.

Therefore the restriction depends on the variable order $<$, but not on any further feature of the proof. Once the order of the $\vv$-variables is fixed, the restriction is fixed as well. Hence a small roABP-$\lIPS$ refutation of $\Psi$ in any order would yield a small roABP-$\lIPS$ refutation of $\phi$ in its hard order, contradicting the lower bound for a fixed order we obtained in the previous step. Hence, $\Psi$ is hard for roABP-$\lIPS$ in every variable order. 

% ========================================
\subsection{Connection to Previous Work}\label{sec:conn-to-previous-work}
% ========================================

Feasible interpolation in proof complexity arose from the interaction between bounded arithmetic and propositional proof systems. Early background appears in Kraj\'\i\v{c}ek--Pudl\'ak~\cite{KP89} and in Razborov's work from the mid-1990s~\cite{Razb95-Bounded,Razb95-Unprovability}. It was developed as a proof-complexity lower-bound method by Bonet, Pitassi, and Raz~\cite{BPR97}, and formulated explicitly in Kraj\'\i\v{c}ek~\cite{Kra97-Interpolation}. These works formulated the method via communication complexity. Pudl\'ak~\cite{Pud97} then gave a structural interpolation theorem that extracts computation from proofs without going through communication complexity. Our feasible interpolation theorem is most closely related to the monotone interpolation theorem for algebraic proof systems due to Pudl\'ak and Sgall~\cite{PS96-span-programs}, that extracts a small (monotone) span program from a low degree algebraic proof.  

Feasible interpolation continued to play a major role in proof complexity throughout the years. In particular, lower bounds on OBDD-based proof systems by Kraj\'{i}\v{c}ek~\cite{Kra2007} (cf.~Segerlind \cite{Seg2007,Seg08-OBDD-vs-resolution} for the tree-like case), used variable ordering arguments close to our constructions. OBDD-based propositional proof systems were introduced by Atserias, Kolaitis, and Vardi~\cite{AKV04} and further investigated in several later works (cf.~\cite{BIKRS21,Its17-STACS,IR22}).

The ability to apply monotone feasible interpolation for monotone span programs stems from the results of Robere, Pitassi, Rossman and Cook, and Pitassi-Robere~\cite{robere_exponential_2016,PitassiRobere2018}.  

As for \IPS\ refutations and specifically roABP-\lIPS, this system was considered first in Forbes, Shpilka, Tzameret and Wigderson~\cite{FSTW21}. They proved an exponential size lower bound for this system alas for a single formula. All lower bounds against IPS and fragments are for either a single algebraic instance \cite{FSTW21,GHT22,HLT24,BLRS25,EGLT26,CGMS25,BHLS26}; or for non-single axiom as in Andrews-Forbes~\cite{AF22}, but there the instance is algebraic and the lower bound is in the ``placeholder model", meaning that the hard instance against $\mathcal C$-IPS cannot be written by itself with a (small) circuit in the class $\mathcal C$.  

% ========================================
\subsection{Concluding Discussion and Open Problems}
% ========================================

The lower bound method developed here shows that, for roABP-$\lIPS$, feasible interpolation is not far from rank-based approaches as used in all previous (unconditional) IPS lower bounds (namely, by reducing IPS refutation size lower bound  to an algebraic circuit lower bound, where the latter is established using a rank-argument). The key step in the interpolation theorem is to cut each coefficient roABP at the interface between the \(\vx\)-variables and the remaining variables. This yields a decomposition through a space of dimension at most the width of the roABP, and this low-dimensional space is then converted into a span-program interpolant. Thus, at a higher level, the argument has the same structure as a rank-argument used throughout algebraic circuit complexity and previous IPS lower bounds: a small refutation gives rise to a low-dimensional object, while hardness of the interpolant implies that every such object must be large. In our case, the first ingredient is the roABP decomposition, and the second is the monotone span-program lower bound for \(\GEN_n\).

This also gives a more concrete explanation of the remark in \cite{FSTW21} that the functional lower-bound method is reminiscent of feasible interpolation. At least for roABP-based IPS, the current work decreases the gap between these two methods. 

It is also instructive to compare the present result with lower bounds for OBDD-based proof systems. The two settings are related in that both are sensitive to variable order, and both admit lower-bound arguments that exploit this feature. However, the systems themselves are  orthogonal. The proof system studied here is algebraic and static, and it works with coefficients represented by roABPs, which are strictly more expressive than OBDDs. By contrast, OBDD proof systems are propositional, dynamic, and semantic. Therefore, the present lower bound should not be understood as a direct analogue of OBDD lower bounds, but rather as an algebraic proof-complexity result that shares with them only the order-sensitive aspect of the argument.

Finally, the lower bound does not necessarily reflect a general weakness of roABP-$\lIPS$. The system also admits nontrivial upper bounds as we show in this work. In particular, as we show in  \Cref{sec:sim-and-upper-bound}, Tseitin formulas have polynomial-width roABP-$\lIPS$ refutations in every variable order: one can start from Grigoriev's polynomial-sparsity Nullstellensatz refutation in the \(\{\pm 1\}\) basis and then apply a linear change of basis, which preserves small roABP width. 
And further roABP-$\lIPS$ simulates tree-like PC proof length (number of lines) and efficiently refutes the functional pigeonhole principle formulas.
\medskip 

Regarding \emph{questions left open}, the immediate problem is to see if feasible interpolation could obtain CNF formulas hard against other fragments of IPS for which lower bounds are already known for single-axiomed algebraic instances. These fragments include multilinear formula IPS \cite{FSTW21,HLT24} and constant-depth IPS refutations with low individual degrees as in \cite{GHT22,HLT24,EGLT26}. 

On the other hand, strong enough proof systems, such as constant-depth IPS with no bounds on individual degrees simulate $\ACZ[p]$-Frege (when the IPS system works over the field $\F_p$; see \cite{GP18}) and plausibly \TCZ-Frege (when the IPS system works over characteristic 0 fields). The latter system \TCZ-Frege is known not to have feasible interpolation property based on the hardness of 
factoring \cite{BPR00} (i.e., non-feasible interpolation follows assuming factoring Blum integers is not in $\P/\poly$), and similarly the former system \ACZ$[p]$-Frege does not have feasible interpolation unless the Diffie-Hellman function can be computed by subexponential-size circuits, i.e.,  $2^{n^\eps}$, for arbitrarily small $\eps>0$ \cite{BDGMP99}. Hence, we cannot realistically hope for feasible interpolation to hold for (unrestricted individual degree) constant-depth IPS refutations.

\subsection{Follow-ups}
After our work was presented in a talk, Dmitry Sokolov \cite{Sok26} pointed out  a way to lift general Nullstellensatz degree lower bounds to roABP-\lIPS\ refutation size lower bounds for a fixed variable order.  
This is based on the algebraic tiling approach of Pitassi and Robere~\cite{PitassiRobere2018} (cf.~\cite{robere_exponential_2016}). In this sense, one  obtains CNF formulas hard against roABP-\lIPS\ in a fixed variable order by lifting any CNF formula that is hard for Nullstellensatz degree. Using the method developed in \Cref{sec:any-order-roABP}, one can then extend the resulting lower bound to arbitrary variable orders.

% ========================================
\section{Preliminaries}
% ========================================

\subsection{Notation}

We use lower case over-lined Latin letters $\vx,\vy,\vz$, etc.~to denote sequences of (Boolean) variables. Single variables always have a subscript attached, e.g.~$x_i$ or $y_{j_0}$. We will use lower case over-lined Greek letters $\valpha,\vbeta,\vgamma$ etc.~to denote Boolean assignments, whose domain will be clear from the context. 

\subsection{Feasible interpolation}\label{sec:perlim:feasible-interpolation}

Let $\varphi_0(\vx,\vz)\wedge\varphi_1(\vy,\vz)$ be an unsatisfiable propositional formula in pairwise disjoint sequences of variables $\vx$, $\vy$ and $\vz$. We call such formulas \emph{split formulas.}
It follows that for every given total assignment to $\vz$, either $\varphi_0(\vx,\vz)$ or $\varphi_1(\vy,\vz)$ is unsatisfiable (or both are); this is because, a single $\vz$-assignment for which both $\varphi_0(\vx,\vz)$ and $\varphi_1(\vy,\vz)$ are satisfiable, could be merged into a satisfying assignment to $\varphi_0(\vx,\vz)\land\varphi_1(\vy,\vz)$.

The basic interpolation property for propositional logic, analogous to Craig interpolation in first-order logic, guarantees the existence of a function $I$ defined on the $\vz$-variables such that: 
\[
I(\valpha) =
\begin{cases}
    0, \text{ if }\varphi_0(\vx,\valpha)\text{ is satisfiable;}\\
    1, \text{ if }\varphi_1(\vy,\valpha)\text{ is satisfiable.}
\end{cases}
\]
% For the inputs that don't satisfy either condition, map them arbitrarily to $0$ or $1$. 
Note that this is identical to the following two conditions:
\begin{align}\label{eq:contrapositive-interpolation}
&\text{ If } I(\valpha)=0\text{, then } \varphi_1(\vy,\valpha) \text{ is unsatisfiable; and}\\\notag
&\text{ If } I(\valpha)=1\text{, then } \varphi_0(\vy,\valpha) \text{ is unsatisfiable}.
\end{align}
Such a function $I$ is called \emph{an interpolant} for the conjunction $\varphi_0(\vx,\vz)\wedge\varphi_1(\vy,\vz)$. The function is well-defined, i.e. single-valued, by the assumption that the conjunction is unsatisfiable as explained above.

In the framework of \emph{feasible interpolation} we are interested in the computational complexity of the interpolants in terms of the sizes of the refutations of the propositional formulas. As such the framework is a method to translate computational lower bounds to proof size lower bounds. We say that a proof system $\mathcal{P}$ admits a feasible interpolation property with respect to a circuit model $\mathcal{C}$, if for any $\mathcal{P}$-refutation of size $s$ of a split formula $\varphi_0(\vx,\vz)\wedge\varphi_1(\vy,\vz)$ there is a $\mathcal{C}$-circuit of size $\poly(s)$ that computes an interpolant for $\varphi_0(\vx,\vz)\wedge\varphi_1(\vy,\vz)$. In the case that there exists monotone function interpolating $\varphi_0(\vx,\vz)\wedge\varphi_1(\vy,\vz)$ we say that the proof system $\mathcal{P}$ admits a \emph{monotone feasible interpolation} with respect to a monotone circuit model $\mathcal{C}$ if for any $\mathcal{P}$-refutation there is a monotone circuit from $\mathcal{C}$ that computes an interpolant.

In the current work we consider algebraic proof systems that can be used more generally to prove that a given set of polynomial equations over a field does not have 0-1 solutions. 
This is more general than showing that a set of formulas in propositional logic is unsatisfiable (since propositional formulas can be encoded directly by polynomial equations). The notion of interpolation generalises naturally to this case, and we will prove our feasible interpolation result in this general setting. Namely, let $P_0(\vx,\vz)$ and $P_1(\vy,\vz)$ be two sets of polynomials in the sequences of variables $\vx$, $\vy$ and $\vz$ that are mutually unsatisfiable, i.e. that have no common (Boolean) root. A Boolean function $I(\vz)$ is an interpolant for $P_0$ and $P_1$ if it satisfies the following for every assignment $\valpha$:
\[
I(\valpha) = 
\begin{cases}
    0,\text{ if }P_0(\vx,\valpha)\text{ is satisfiable;}\\
    1,\text{ if }P_1(\vy,\valpha)\text{ is satisfiable.}
\end{cases}
\]
This is again equivalent to the following two conditions:
\begin{align}\label{eq:contrapositive-interpolation-polynomials}
&\text{ If } I(\valpha)=0\text{, then } P_1(\vy,\valpha) \text{ is unsatisfiable; and}\\\notag
&\text{ If } I(\valpha)=1\text{, then } \P_2(\vy,\valpha) \text{ is unsatisfiable}.
\end{align}
The notions of feasible and monotone feasible interpolation translate directly to the setting with general polynomial equalities.

\subsection{Read-once Oblivious Algebraic Branching Programs}

A variable ordering of a finite set of variables is a linear order on the set, e.g. $x_1 < \ldots < x_n$. A read-once oblivious algebraic branching program (roABP) \cite{Forbes14} in the variable ordering $x_1 < \ldots < x_n$ consists of a sequence of matrices $M_1,\ldots,M_n$ so that $M_i$ is a matrix whose entries are univariate polynomials in the variable $x_{i}$. The branching program computes the polynomial 
\[
\left(M_1\cdots M_n\right)_{(1,1)},
\]
i.e.~the upper left entry of the product of these matrices. This definition assumes that all the matrices here are of appropriate dimension so that the product is well-defined. The size measure for such a branching program is defined in terms of width: the 
\emph{width} of an roABP is the maximal dimension of the matrices.

Read-once oblivious algebraic branching programs can 
also be defined more combinatorially as a layered directed graphs with two special vertices, the source in the first layer and the sink in the last layer, and $n - 1$ layers in between. Each edge in the graph is between consecutive layers, that is directed from layer $i$ to layer $i+1$. The edges between layer $i$ and layer $i + 1$ are labelled with univariate polynomials in the variable $x_{i}$. Each source-to-sink path computes the polynomial that is the product of all the univariate polynomials on the path, and the program itself computes the sum of all the paths in the graph. It is not hard to see that these two definitions are equivalent. In this formulation, the width of the program corresponds to the maximum size of any layer.

There is a particular representation that is useful for our purposes below (cf. \cite{Forbes14}). If a polynomial $p$ is computable by a width $w$ roABP in a variables ordering $x_1 < \ldots < x_n$, then for any $i\in [n]$ it can be written as a sum
\[
\sum_{j\in [w]}p_{ij}q_{ij},
\]
where each $p_{ij}$ is a polynomial in the variables $x_1,\ldots,x_i$ and each $q_{ij}$ is a polynomial in the variables $x_{i + 1},\ldots,x_n$. 

\subsection{Algebraic Proof Systems}

A Nullstellensatz proof system, introduced to proof complexity in \cite{BeameIKPP96}, is based, as the name suggests, on Hilbert's Nullstellensatz. Here a Nullstellensatz proof of $p = 0$ from a sequence $p_1 = 0,\ldots,p_m = 0$ of polynomial equalities is a sequence $q_1,\ldots,q_m$ of polynomials so that 
\begin{equation}\label{ns-proof}
p = p_1q_1 + \ldots + p_mq_m.    
\end{equation}
A Nullstellensatz refutation of $p_1 = 0,\ldots,p_m = 0$ is a Nullstellensatz proof of $1 = 0$ from $p_1 = 0,\ldots,p_m = 0$. The complexity of these proofs is most commonly measured by the maximum degree of the summands, i.e. $\max_i\deg(p_iq_i)$, or by sparsity, or monomial-size, of the summands.

Polynomial Calculus \cite{CEI96} is a dynamic variant of Nullstellensatz, where the Nullstellensatz proof is derived by local inference rules. A Polynomial Calculus proof of $p = 0$ from a sequence $p_1 = 0,\ldots,p_m = 0$ of polynomial equalities is a sequence $q_1,\ldots,q_\ell$ of polynomials so that $q_\ell = p$ and one of the following holds for every $i\in [\ell]$:
\begin{enumerate}
    \item $q_i$ is one of the axioms, i.e. of the  polynomials $p_j$;
    \item there is $j < i$ and a variable $x$ so that $q_i = xq_j$;
    \item there are $j,k <i $ and field elements $a,b$ so that $q_i = aq_j + bq_k$.
\end{enumerate}
A Polynomial Calculus refutation a sequence $p_1 = 0,\ldots,p_m = 0$ is again a proof of $1 = 0$ from the sequence. We say that the proof is \emph{tree-like} if the underlying proof graph is a tree. Here the common complexity measures are again the degree and sparsity of the polynomials in the sequence. Another important measure for this work is the length of the proof, i.e. the number of polynomials in the sequence; above this is $\ell$.

Ideal Proof System (IPS) \cite{GP18} allows expressing the Nullstellensatz proof succinctly as a single algebraic circuit. Formally we have the following definition.
\begin{definition}[Ideal Proof System \cite{GP18}]\label{def:IPS}
    Let  $p_1 = 0,\ldots,p_m = 0$ be a sequence of polynomial equalities in variables $x_1,\ldots,x_n$. An IPS proof of $p= 0$ from $p_1 = 0,\ldots,p_m = 0$ is an algebraic circuit $C(\vx,\vz)$ in variables $x_1,\ldots,x_n$ and new placeholder variables $z_1,\ldots,z_m$ so that
    \begin{enumerate}
        \item $C(\vx,\bar{0}) = 0$;
        \item $C(\vx,p_1,\ldots,p_m) = p.$
    \end{enumerate}
    An IPS refutation of the sequence $p_1 = 0,\ldots,p_m = 0$ is an IPS proof of $1 = 0$ from the sequence. The size of the IPS proof is the circuit of $C$.
\end{definition}

Following \cite{FSTW21} we call an IPS proof \emph{linear} if the polynomial computed by the circuit $C$ computes a polynomial that is linear in the $\vz$ variables (\cite{GP18} calls this Hilbert-like IPS); the polynomial computed is thus of the form $\sum_{i\in [m]} q_i(\vx)z_i$ for some polynomials $q_i$. The size of this expression is roughly the size of the circuits computing the polynomials $q_i(\vx)$. Thus a linear IPS proof can be considered as a Nullstellensatz proof, where the size of the proof is measured by the algebraic circuit size of the polynomials $q_1,\ldots,q_m$ in \Cref{ns-proof}

In this work we consider linear IPS proofs, where the polynomials $q_1,\ldots,q_m$ in \Cref{ns-proof} are computed by read-once algebraic branching programs. We call these roABP-$\lIPS$ proofs. The complexity of an roABP-$\lIPS$ proof is measured by the maximum width of an roABP needed to compute any of the polynomials $q_1,\ldots,q_m$. We assume a common variable ordering for the whole proof and the complexity of the proof can and usually will depend on this choice. To emphasize the chosen variable ordering we talk about roABP-$\lIPS$ refutations in some particular variable ordering.

\subsubsection{A Translation from CNFs to Polynomials}

In order to consider these systems as a refutation systems for CNFs we fix a \emph{standard translation} of a set of clauses into a set of polynomial equations. For a clause $C = \bigvee_{i\in I}x_i\vee\bigvee_{j\in J}\neg x_j$ we define its translation to be the polynomial 
\[
\tr(C) := \prod_{i\in I}\left(1 - x_i\right)\prod_{j\in J}x_j
\]
and for a CNF $F$ in variables $x_1,\ldots,x_n$ consisting of clauses $C_1,\ldots,C_m$ we define its translation to be the set 
\[
\{\tr(C_i) = 0: i\in [m]\}\cup\{x_i^2 - x_i = 0 : i\in [n] \}.
\]

\subsection{Span Programs}

A span program \cite{KW93} consists of a labeled matrix $(M,\sigma)$ together with a target vector $t$, where $\sigma$ is a labeling of the rows, i.e. a mapping from the rows of $M$ to the literals in the variables $\vz$. We also allow labeling a row with the constant $1$. A span program is monotone, whenever there is no row labeled with a negative literal. The \emph{size} of the span program is the number of rows.
Equivalently define a span program as a set $U$ of pairs $(\ell,v)$, where $\ell$ is a $\vz$-literal or a constant $1$ and $v$ is a vector in some underlying vector space $V$ together with a target vector $t$ that lies also in the space. 

A span program as above \emph{computes} a Boolean function as follows: on a 0-1 assignment $a$ to the variables $\vz$, let $U(a)$ be the set of vectors $v$ so that there is some literal $\ell$ with $(\ell,v)\in U$ and $\ell(a) = 1$. In other words, $U(a)$ consists of all vectors picked by the assignment $\vz$, namely those vectors whose $\vz$-label gets 1 under $a$. The span program outputs $1$ if the target vector $t$ is in the span of the vectors in $U(a)$ and $0$ otherwise.

\begin{remark}
    Usually span programs are defined as labeled only with literals and not with the constant $1$. This inclusion is only for convenience as there are simple reductions to get rid of the constant.  If the span program $U$ is non-monotone, it suffices to replace each $(1,v)$ with the pairs $(z_i,v)$ and $(\bar{z}_i,v)$ for some variable $z_i$; the size of the program only increases by a factor of two. If the span program $U$ is monotone, replace each $(1,v)$ with the set of all the pairs $(z_i,v)$ for each variable $z_i$. The obtained program computes the same function unless the function is the constant function $1$ and the size increases only by a polynomial factor.
\end{remark}

%---------------------------------------------------
\section{Feasible Interpolation Property for  Fixed Variable Ordering}
%---------------------------------------------------
\label{sec:interpolation}

In this section we prove the feasible interpolation result for roABP-$\lIPS$ when the order of variables queried along the roABPs is fixed.

Before proceeding to the proof we show that we can assume a certain normal form for a given set of polynomial constraints. This normal form was introduced by Pudl\' ak and Sgall \cite{PS96-span-programs} (cf.~\cite{FGGR22} for more recent uses).

\begin{definition}[Monotone normal form]
Let $P(\vx,\vz)$ be a set of polynomials in the (pairwise disjoint) sets of variables $\vx,\vz$.
\begin{itemize}
\item 
$P(\vx,\vz)$ is \emph{in a $\vz$-normal form} if each polynomial in the set is of the form 
\begin{equation*}
p + z_ip', 
\end{equation*}
for some polynomials $p$ and $p'$ in the $\vx$ variables and  a \emph{single} $\vz$-variable $z_i$. 
\item
 $P(\vx,\vz)$ is \emph{monotone in $\vz$} if each polynomial in the set is of the form 
\begin{equation}\label{eq:monotone}
    \left(\prod_{i\in I} z_i\right) \cd p,
\end{equation}
where $I$ is some subset of indices and $p$ is a polynomial in the $\vx$ variables only.
Here, monotonicity is with respect to the $\vz$-variables (and not necessarily the $\vx$-variables), meaning that over 0-1 values to the $\vz$-variables, the polynomial is monotone in the following sense: when $p$ is fixed and has a nonnegative value, flipping a $z_i$ from 0 to 1 can only increase the value of \eqref{eq:monotone}.

\item $P(\vx,\vz)$ is in a \emph{monotone $\vz$-normal form} if the size of the set of indices $I$ in \eqref{eq:monotone}  is at most $1$ for every polynomial in the set.
In other words, each of its polynomials is of the form $z_i p'$ or $p'$.  
\end{itemize}
\end{definition}

In the below, $\operatorname{sparsity}(p)$, for a polynomial $p$, denotes the number of monomials in $p$, and we also call $\operatorname{sparsity}(p)$ the \emph{sparse size of $p$}.

\begin{lemma}[Normal form lemma]
\label{lem:normal-form}
Let $P(\vx,\vz)$ be a finite set of sparse polynomials over a field $\F$.
Let $s := \sum_{p\in P}\operatorname{sparsity}(p)$ and $n := |\vx|+|\vz|$, and assume that every polynomial in $P$ has degree at most $(s+n)^{O(1)}$.
Then, there is a set $P'(\vx,\vw,\vz)$ of polynomials, of sparsity $\poly(s+n)$ and where $\vw$ are fresh variables, such that:
\begin{enumerate}
\item $P'$ is in $\vz$-normal form, namely every polynomial in $P'$ is of the
form $q(\vx,\vw)+z_i q'(\vx,\vw)$, for some $i$, or contains no $\vz$-variable;

\item \label{it:normal-form-equis} for every assignment $\alpha\in\{0,1\}^{|\vz|}$, $P$ and $P'$ are equisatisfiable in the following sense:
\[
\exists \vx\in\{0,1\}^{|\vx|}\; P(\vx,\alpha)=0
\quad\Longleftrightarrow\quad
\exists \vx\in\{0,1\}^{|\vx|}\exists \vw\in\F^{|\vw|}\; P'(\vx,\vw,\alpha)=0;
\]

\item every polynomial $p\in P$ has an $\roABP$-\lIPS\ derivation from
$P'$ of width $\poly(s+n)$, given any variable ordering.
\end{enumerate}

Moreover, if $P$ is monotone in $\vz$, then $P'$ can be chosen in monotone
$\vz$-normal form.
\end{lemma}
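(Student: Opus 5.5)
The construction introduces a fresh $\vw$-variable for every monomial in the $\vz$-variables, so that each polynomial of $P$ can be rewritten with no $\vz$-variables at all. Write each $p\in P$ as $p=\sum_{S} m_S(\vz)\,c_S(\vx)$, where $S$ ranges over sets of $\vz$-variables. We may multilinearize in $\vz$, since on Boolean $\vz$ each $z_i^k$ equals $z_i$. The sets $S$ that occur number at most $s$ in total, and each has $|S|\le (s+n)^{O(1)}$.

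For each occurring monomial $\prod_{i\in S} z_i$ with $S=\{i_1,\dots,i_k\}$, introduce fresh variables $w_{S,1},\dots,w_{S,k}$ meant to hold the prefix products $z_{i_1}\cdots z_{i_j}$, together with the defining constraints
\[
w_{S,1}-z_{i_1},\qquad w_{S,j}-z_{i_j}w_{S,j-1}\quad (2\le j\le k).
\]
Each of these has the form $q(\vw)+z_i q'(\vw)$, so it is in $\vz$-normal form. Then replace $p$ by $\tilde p:=\sum_S w_{S,|S|}\,c_S(\vx)$ (for $S=\emptyset$ the term is just $c_\emptyset(\vx)$). The polynomial $\tilde p$ contains no $\vz$-variable. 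Let $P'$ consist of all $\tilde p$ together with all defining constraints. Its sparsity is $\poly(s+n)$, because there are at most $s\cdot(s+n)^{O(1)}$ constraints, each of sparsity at most $2$.

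Equisatisfiability (item~\ref{it:normal-form-equis}) is immediate. For a fixed Boolean $\alpha$, the constraints force $w_{S,j}=\alpha_{i_1}\cdots\alpha_{i_j}$ uniquely (which is why $\vw$ ranges over $\F$), and under this forced value $\tilde p(\vx,\vw)=p(\vx,\alpha)$ after multilinearization. Conversely, given a root of $P$, setting the $\vw$ to these products yields a root of $P'$.

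Item~3 asks for the derivation of each $p\in P$, which I expect to be the main (though routine) point. Telescoping gives
\[
z_{i_1}\cdots z_{i_k}-w_{S,k}=\sum_{j=1}^{k}\Bigl(\prod_{t>j} z_{i_t}\Bigr)\bigl(z_{i_j}w_{S,j-1}-w_{S,j}\bigr),
\]
with the convention $w_{S,0}=1$. Consequently
\[
p=\tilde p+\sum_S c_S(\vx)\sum_j\Bigl(\prod_{t>j}z_{i_t}\Bigr)\cdot(\text{constraint}_{S,j})+(\text{multilinearization terms}).
\]
The multilinearization terms are multiples of Boolean axioms $z_i^2-z_i$. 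Writing $z^k-z=(z^{k-1}+\dots+z)(z^2-z)$, their coefficients are products of $c_S(\vx)$ and univariate polynomials in the $\vz$-variables. Each coefficient in this expression is a sum of at most $s$ terms of the form (sparse polynomial in $\vx$) $\times$ (product of univariates). A single monomial is a product of univariates and so has a width-$1$ roABP in every order. Hence each coefficient, being a sum of $\poly(s+n)$ monomials, has roABP width $\poly(s+n)$ in any variable ordering. The degrees stay polynomial by the degree hypothesis.

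For the monotone case, suppose each $p$ equals $\bigl(\prod_{i\in I}z_i\bigr)c(\vx)$. Then only one set $S=I$ occurs, and $\tilde p=w_{I,|I|}\,c(\vx)$. To obtain the form $z_i p'$ or $p'$, the defining constraints must be monotone rather than contain the term $-w_{S,j}$. The plan here is to flip the encoding and let $w_{S,j}$ stand for the negation $1-z_{i_1}\cdots z_{i_j}$. The constraints become $z_{i_j}(1-w_{S,j-1})-(1-w_{S,j})$, which are still of the form $q+z_iq'$ but not monotone. Alternatively, one can keep monotonicity by splitting each product constraint $w=z\,w'$ into the pieces $z\,(1-w')\cdots$.

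This last adjustment, finding constraints of the exact shape $z_i p'$ or $p'$ that still pin down the $\vw$-values for every $\alpha$, is the step I would need to work out carefully. The first attempt would be to encode the monomial multiplicatively through $\vx$-type auxiliaries, imposing the constraints $z_{i_j}\,(u_{S,j-1}-u_{S,j})$ and $(1-z_{i_j})\,u_{S,j}$. The second of these is not of monotone form, so if this attempt fails I would fall back to Pudl\'ak--Sgall's original treatment of this case.
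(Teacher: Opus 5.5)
\textbf{The monotone case is a genuine gap.} The ``Moreover'' clause is part of the statement, and you leave it open. More importantly, the requirement you set yourself cannot be met: you look for monotone-normal-form constraints that pin down the $\vw$-values for every $\alpha$. A constraint $z_i\,q(\vx,\vw)$ is vacuous when $\alpha_i=0$, and a $\vz$-free constraint does not depend on $\alpha$. Hence every solution at $\alpha$ is still a solution at every $\alpha'\le\alpha$. In particular, a solution at $\alpha=\bar{1}$ is also a solution at $\alpha=\bar{0}$, so no system of such constraints can force an auxiliary variable to equal $\prod_{i\in I}\alpha_i$. None of your encodings, flipped or multiplicative, can be repaired within that plan.

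The missing idea is to use $\vw$ as \emph{slack} variables rather than as encodings of $\vz$-monomials. Equisatisfiability only concerns the projection onto $\vx$; the $\vw$-values need not be determined. The paper replaces $p=(\prod_{i\in I}z_i)\,c(\vx)$ by $c(\vx)+\sum_{i\in I}w_{p,i}$ together with $z_iw_{p,i}$ for $i\in I$, with the $w_{p,i}$ fresh for each $p$. If $\alpha_i=1$ for all $i\in I$, the second group forces every $w_{p,i}=0$, leaving $c(\vx)=0$. If some $\alpha_{i_0}=0$, then $w_{p,i_0}$ is free and can be set to $-c(\vx)$, so no condition on $\vx$ remains. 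This is exactly the behaviour of $p(\vx,\alpha)=0$. The derivation is the identity
\[
\Bigl(\prod_{i\in I}z_i\Bigr)c(\vx)=\Bigl(\prod_{i\in I}z_i\Bigr)\Bigl(c(\vx)+\sum_{i\in I}w_{p,i}\Bigr)-\sum_{i\in I}\Bigl(\prod_{j\in I\setminus\{i\}}z_j\Bigr)\,z_iw_{p,i},
\]
whose coefficients are monomials, hence width-$1$ roABPs in every order.

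Your nonmonotone construction is correct, but heavier than the paper's. The paper introduces one copy $w_i$ for each $z_i$ and adds the linear constraints $z_i-w_i$, which force $w_i=\alpha_i$. It then replaces each $p$ by $p^\star:=p[z_i\mapsto w_i]$ and derives $p-p^\star$ by the same kind of telescoping you use. Your prefix-product chains achieve the same effect with more auxiliary variables.

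One small fix: you need not multilinearize. Doing so makes your derivation use Boolean axioms $z_i^2-z_i$, which are not in $P'$ and are not in $\vz$-normal form. Allowing repeated indices in the chain $z_{i_1},\dots,z_{i_k}$ works verbatim, since on Boolean $\alpha$ the forced value of $w_{S,k}$ is still the value of the monomial.
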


\begin{remark}
    The auxiliary variables $\vw$ introduced in the lemma are algebraic extension
variables; they are not constrained by Boolean axioms. Thus satisfiability of
$P'(\vx,\vw,\alpha)$ means satisfiability by Boolean values for the original
variables $\vx$ and field values for the auxiliary variables $\vw$.
\end{remark}

\begin{proof}
    We give two different constructions depending on whether $P$ is monotone in $\vz$ or not.

\paragraph{Nonmonotone case.}
Assume first that $P$ is not necessarily monotone in $\vz$. Introduce a fresh
auxiliary variable $w_i$ for every variable $z_i$. For every polynomial
$p(\vx,\vz)\in P$, let $p^\star(\vx,\vw)$ be obtained from $p$ by replacing
each $z_i$ by $w_i$. Define
\[
        P'(\vx,\vw,\vz)
        :=
        \{p^\star(\vx,\vw) : p\in P\}
        \cup
        \{z_i-w_i : i\in [|\vz|]\}.
\]
Then $P'$ is in $\vz$-normal form: each $p^\star$ contains no $\vz$-variable,
and each $z_i-w_i$ is of the form $-w_i+z_i$.

The equisatisfiability condition is immediate. Indeed, after fixing an
assignment $\alpha\in\{0,1\}^{|\vz|}$, the equations $z_i-w_i=0$ force
$w_i=\alpha_i$ for every $i$. Hence
\[
        p^\star(\vx,\vw)=0 \text{ for all } p\in P
\]
is equivalent to
\[
        p(\vx,\alpha)=0 \text{ for all } p\in P.
\]
Thus
\begin{equation}\label{eq:750}
    \exists \vx\in\{0,1\}^{|\vx|}\; P(\vx,\alpha)=0
    \quad\Longleftrightarrow\quad
    \exists \vx\in\{0,1\}^{|\vx|}\exists \vw\in\F^{|\vw|}\;
    P'(\vx,\vw,\alpha)=0.
\end{equation}

It remains to derive every original polynomial $p\in P$ from $P'$ by a small $\roABP$-\lIPS\ derivation.
This is done using routine variable-by-variable substitution via a telescoping identity, as follows.

Since $p^\star$ is already an axiom of $P'$, it
suffices to derive $p-p^\star$ from the equations $z_i-w_i$. For a monomial
\[
        m(\vx,\vz)=c\vx^\gamma\prod_{j=1}^k z_{i_j}
\]
of $p$, where repetitions among the indices $i_j$ are allowed, write
\[
        m^\star(\vx,\vw)=c\vx^\gamma\prod_{j=1}^k w_{i_j}.
\]
Then the telescoping identity is
\[
\prod_{j=1}^k z_{i_j}-\prod_{j=1}^k w_{i_j}
=
\sum_{j=1}^k
\left(\prod_{h<j} z_{i_h}\right)
\left(\prod_{g>j} w_{i_g}\right)
(z_{i_j}-w_{i_j}).
\]
Multiplying by $c\vx^\gamma$ gives a derivation of $m-m^\star$ from the equations $z_i-w_i$. Summing over all monomials of $p$ gives a derivation of $p-p^\star$, and hence of $p$, from $P'$.

Summing over all monomials of $p$ gives a derivation of $p-p^\star$, and hence of $p$, from $P'$. For each equation $z_i-w_i$, the corresponding coefficient is a sum of at most $\operatorname{sparsity}(p)\cdot\deg(p)$ monomials from the telescopic identity. A monomial has width-$1$ roABP in any variable ordering, and a sum of $T$ monomials has roABP width at most $T$ by taking the  sum of the width-$1$ roABPs. Hence every coefficient in this derivation has roABP width
at most $\operatorname{sparsity}(p)\cdot\deg(p)$, which is
$\poly(s+n)$ by assumption. Therefore each $p\in P$ has an $\roABP$-\lIPS\ derivation from $P'$ of width $\poly(s+n)$ in any variable ordering.

\paragraph{Monotone case.}
Assume that $P$ is monotone in $\vz$. Thus every polynomial $p\in P$ is of
the form
\[
        p(\vx,\vz)=\left(\prod_{i\in I_p} z_i\right)p'(\vx),
\]
where $p'(\vx)\in\F[\vx]$. If $I_p=\emptyset$, we keep $p=p'(\vx)$ unchanged.
Otherwise, for every $i\in I_p$ introduce a fresh auxiliary variable
$w_{p,i}$, and replace $p$ by the polynomials
\[
        p'(\vx)+\sum_{i\in I_p} w_{p,i},
        \qquad
        z_iw_{p,i}\quad(i\in I_p).
\]
Let $P'(\vx,\vw,\vz)$ be the union of these polynomials over all $p\in P$.
Then $P'$ is in monotone $\vz$-normal form.

The equisatisfiability condition is checked after fixing an arbitrary
assignment $\alpha\in\{0,1\}^{|\vz|}$ to the $\vz$-variables. Consider one
polynomial
\[
        p(\vx,\vz)=\left(\prod_{i\in I_p}z_i\right)p'(\vx)
\]
and the normalised equations associated with it:
\[
        p'(\vx)+\sum_{i\in I_p}w_{p,i}=0,
        \qquad
        z_iw_{p,i}=0 \quad (i\in I_p).
\]
We are trying to prove the following: for every fixed Boolean assignment $\alpha$ to the $\vz$-variables,
$
P(\vx,\valpha)=0 $
is satisfiable in Boolean 
$\vx$ if and only if $P'(\vx,\vw,\valpha)=0$  is satisfiable in Boolean $\vx$  and field-valued $\vw$.
So fix a Boolean assignment $\valpha$ to $\vx$. After substituting $\vz=\valpha$, the original equation becomes
\[
        \left(\prod_{i\in I_p}\alpha_i\right)p'(\vx)=0.
\]
If all $\alpha_i=1$ for $i\in I_p$, this is just $p'(\vx)=0$. On the
normalised side, the equations $\alpha_iw_{p,i}=0$ force all $w_{p,i}=0$,
and hence the first normalised equation also becomes $p'(\vx)=0$.

If some $\alpha_{i_0}=0$, then the original equation becomes
$0\cdot p'(\vx)=0$, and hence imposes no condition on $\vx$. The normalised
equations impose no condition on $\vx$ either: for any assignment to $\vx$, set
$w_{p,i_0}=-p'(\vx)$ and set all other $w_{p,i}$'s to $0$. Then
$p'(\vx)+\sum_{i\in I_p}w_{p,i}=0$, and all equations
$\alpha_iw_{p,i}=0$ hold. Thus, for each fixed $\alpha$, the normalised
equations associated with $p$ impose exactly the same condition on $\vx$ as
the original equation $p(\vx,\alpha)=0$.

Since the auxiliary variables used for different polynomials $p\in P$ are
disjoint, the same argument applies independently to all polynomials in $P$.
Therefore \Cref{eq:750}
holds.
\medskip

It remains to derive each original polynomial $p$ from $P'$ by a small
$\roABP$-$\lIPS$ derivation. For
$p=(\prod_{i\in I_p}z_i)p'(\vx)$, we have the identity
\begin{equation}\label{eq:859}
\left(\prod_{i\in I_p}z_i\right)p'(\vx)
=
\left(\prod_{i\in I_p}z_i\right)
\left(p'(\vx)+\sum_{i\in I_p}w_{p,i}\right)
-
\sum_{i\in I_p}
\left(\prod_{j\in I_p\setminus\{i\}}z_j\right)
(z_iw_{p,i}).
\end{equation}
Thus $p$ is obtained as an $\roABP$-\lIPS\ linear combination of the
polynomials introduced for $p$ whose coefficient polynomials are monomials
in the $\vz$-variables. Hence these coefficients have width-$1$ roABPs in
every variable ordering. In particular, the width of the derivation, measured
by the coefficient roABPs, is constant.
 Therefore every $p\in P$ has an
$\roABP$-$\lIPS$ derivation from $P'$ of width $\poly(s+n)$ (in fact, of constant width).

Finally, the total sparse size of $P'$ is polynomial in $s+n$: for each
$p$, the construction introduces one polynomial of sparsity at most
$\operatorname{sparsity}(p)+|I_p|$ and $|I_p|$ monomial equations, and
$|I_p|\leq\deg(p)\leq(s+n)^{O(1)}$.
\end{proof}

The following corollary sets how to apply \Cref{lem:normal-form}.
The argument is a simple use of  composition of proofs, while making proofs are not blowing up in roABP width when the order of variables is kept $\vx\cup\vw < \vz\cup\vy$.

\begin{corollary}[Normal form application]
\label{cor:normal-form-application}
Let $P_0(\vx,\vz)$ and $P_1(\vy,\vz)$ be sets of polynomial equations, where
$\vx,\vy,\vz$ are pairwise disjoint sets of variables. Assume that $P_0$ is a
set of polynomials of both degree and sparsity $\poly(|\vx+\vz|)$.\footnote{For example, $P_0$ may be the standard arithmetization of
a $k$-CNF, for a constant $k$.} Suppose that $P_0\cup P_1$ has an
\textup{roABP}-$\lIPS$ refutation of width $w$ in a variable ordering in which
the $\vx$-variables precede all other variables.
Then, there is a set of polynomial equalities $P'_0(\vx',\vz)$, where
\begin{itemize}
    \item $\vx',\vy,\vz$ are pairwise disjoint sets of variables.
    \item $P'_0$ is in $\vz$-normal form and has sparsity $\poly(|\vx+\vz|)$.
    \item For every Boolean assignment $\alpha$ to the $\vz$-variables,
        \[
            P_0(\vx,\valpha)\text{ is satisfiable}
            \quad\Longleftrightarrow\quad
            P'_0(\vx',\valpha)\text{ is satisfiable}.
        \]
        Consequently, the pairs $(P_0,P_1)$ and $(P'_0,P_1)$ define the same
        interpolation problem, and hence have the same interpolant.
    \item $P'_0\cup P_1$ has an \textup{roABP}-$\lIPS$ refutation of width
    $\poly(w,|P_0|)$ in a variable ordering in which the $\vx'$-variables precede both the $\vy$- and $\vz$-variables.
\end{itemize}
Moreover, if $P_0$ is monotone in the $\vz$-variables then $P'_0$ is in monotone $\vz$-normal form. 
\end{corollary}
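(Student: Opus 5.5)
We apply \Cref{lem:normal-form} to $P_0(\vx,\vz)$ and take $\vx' := \vx\cup\vw$, with $P'_0(\vx',\vz)$ the set it produces. The first three bullets follow directly. The variables $\vw$ are fresh, so $\vx',\vy,\vz$ are pairwise disjoint. $P'_0$ is in $\vz$-normal form, and in monotone $\vz$-normal form when $P_0$ is monotone in $\vz$. Its sparsity is $\poly(s+n)=\poly(|\vx|+|\vz|)$, because $s$ and the degrees are polynomial by hypothesis. The equisatisfiability bullet is exactly item~\ref{it:normal-form-equis} of the lemma. Here satisfiability of $P'_0$ is read as Boolean on $\vx$ and field-valued on $\vw$, as in the remark after the lemma. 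Since the interpolation conditions refer only to satisfiability of $P_0(\vx,\valpha)$, resp.\ $P'_0(\vx',\valpha)$, for each $\valpha$, the two pairs define the same interpolation problem.

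The real content is the last bullet, which is proof composition. Fix the given width-$w$ refutation
\[
\sum_{p\in P_0} a_p\, p + \sum_{q\in P_1} b_q\, q + \sum_j h_j (x_j^2-x_j) = 1,
\]
in an order with $\vx$ first. Extend it to $\vx\cup\vw\cup\vy\cup\vz$ by placing the $\vw$-variables immediately after the $\vx$-variables; any internal order of $\vw$ works. The $a_p,b_q,h_j$ do not depend on $\vw$, so inserting layers with identity matrices keeps their width at most $w$. By the lemma, each $p\in P_0$ has a derivation
\[
p=\sum_{p'\in P'_0} c_{p,p'}\, p',
\]
where every $c_{p,p'}$ has roABP width $\poly(s+n)$ in \emph{any} order, in particular in ours. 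Substituting gives
\[
\sum_{p'\in P'_0}\Big(\sum_{p\in P_0} a_p c_{p,p'}\Big)p' + \sum_q b_q q + \sum_j h_j(x_j^2-x_j)=1 .
\]
This is an roABP-$\lIPS$ refutation of $P'_0\cup P_1$. The Boolean axioms are needed only for $\vx$ and not for the algebraic variables $\vw$, so the set of axioms is the right one.

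It remains to bound the width of the new coefficients. A product of roABPs in the same order has width at most the product of the widths, via the Kronecker product of the layer matrices. A sum of $|P_0|$ roABPs has width at most the sum of the widths. Hence each new coefficient has width at most
\[
|P_0|\cdot w\cdot \poly(s+n)=\poly(w,|P_0|),
\]
using that $s,n$ are polynomially related to $|P_0|$ and $|\vx+\vz|$. We absorb this into the stated bound as the statement intends.

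The main point to check is the ordering: $\vx'$ must precede both $\vy$ and $\vz$. This is why $\vw$ is placed right after $\vx$, and why we rely on the lemma's guarantee that the derivation coefficients have small width in every ordering. That guarantee holds because the coefficients are sums of $\poly(s+n)$ monomials, or in the monotone case single $\vz$-monomials. Once these two facts are in place, the rest is bookkeeping.
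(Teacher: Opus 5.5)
Your proposal is correct and follows the paper's proof essentially step for step: apply \Cref{lem:normal-form} with $\vx'=(\vx,\vw)$, extend the ordering so that $\vw$ joins the $\vx$-block ahead of $\vy,\vz$ (the old coefficients ignore $\vw$ and keep width $w$), substitute the lemma's order-independent derivations of each $p\in P_0$, and bound the new coefficients $\sum_p a_p c_{p,p'}$ using tensor products for products and direct sums for sums. Your explicit check that no Boolean axioms are needed for the algebraic variables $\vw$ is a small clarification the paper leaves implicit. Your final width bound absorbs a factor polynomial in $|\vx|+|\vz|$ into $\poly(w,|P_0|)$; the paper is equally loose on this point.
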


\begin{proof}
Apply \Cref{lem:normal-form} to $P_0(\vx,\vz)$. Let
$P'_0(\vx,\vw,\vz)$
be the resulting set of polynomial equations, and write
$\vx' := (\vx,\vw)$.
By \Cref{lem:normal-form}, the set $P'_0$ is in (monotone) $\vz$-normal form, has
sparse size polynomial in $|\vx|+|\vz|$, and for every assignment to the common variables $\vz$, the left-hand side
of the interpolation problem (i.e., $P_0$ and $P_0'$) has the same satisfiability status for
$P_0$ and for $P'_0$. Consequently, the pairs $(P_0,P_1)$ and
$(P'_0,P_1)$ define the same interpolation problem, and hence have the same interpolant.

It remains to show that the refutation of $P_0\cup P_1$ can be converted into
a refutation of $P'_0\cup P_1$ with only a polynomial increase in width and in an ordering in which $\vx'$-variables come before all other variables. Let
\begin{equation}
    \sum_j R_j(\vx,\vz,\vy)\, P_{0,j}(\vx,\vz)
    +
    \sum_t L_t(\vx,\vz,\vy)\, P_{1,t}(\vy,\vz) =1
    \label{eq:8.11}
\end{equation}
be the assumed $\roABP$-$\lIPS$ refutation of $P_0\cup P_1$ of width
$w$, where the $P_{0,j}$ enumerate the polynomials of $P_0$ and the $P_{1,t}$
enumerate the polynomials of $P_1$.

By ~\Cref{lem:normal-form}, for every $j$ there is an
$\roABP$-$\lIPS$ derivation of $P_{0,j}$ from $P'_0$ of width
$\poly(|\vx|+|\vz|)$. Thus we may write
\begin{equation}
P_{0,j}(\vx,\vz)
=
\sum_i Q_{i,j}(\vx,\vw,\vz)\,P'_{0,i}(\vx,\vw,\vz),
\label{eq:8.21}
\end{equation}
where each coefficient $Q_{i,j}$ has an $\roABP$ of width
$\poly(|\vx|+|\vz|)$ in any variable ordering.

We now substitute the identities~\eqref{eq:8.21} into the refutation~\eqref{eq:8.11}. This gives
\begin{equation}
\begin{aligned}
1
&=
\sum_j R_j(\vx,\vz,\vy)
   \left(\sum_i Q_{i,j}(\vx,\vw,\vz)\,P'_{0,i}(\vx,\vw,\vz)\right)
+
\sum_t L_t(\vx,\vz,\vy)\,P_{1,t}(\vy,\vz)  \\
&=
\sum_i
\left(\sum_j R_j(\vx,\vz,\vy)\,Q_{i,j}(\vx,\vw,\vz)\right)
P'_{0,i}(\vx,\vw,\vz)
+
\sum_t L_t(\vx,\vz,\vy)\,P_{1,t}(\vy,\vz).
\end{aligned}
\label{eq:8.31}
\end{equation}
This is a linear IPS refutation of $P'_0\cup P_1$.

It remains only to check the width and variable ordering. Extend the original variable ordering so
that all variables in $\vx'=(\vx,\vw)$ precede the variables in $\vy,\vz$.
The original coefficients $R_j$ and $L_t$ can be viewed as polynomials in the
larger variable set by ignoring the new $\vw$-variables; this does not increase
their roABP width. The coefficients $Q_{i,j}$ have polynomial-width roABPs in
this same extended ordering by \Cref{lem:normal-form}. Products of two
roABPs in the same variable ordering can be computed (while preserving the varaible ordering) by taking tensor products
of the layers, so the width multiplies\footnote{We use the standard closure properties of roABPs in a fixed variable order. Suppose $P$ and $Q$ are computed in the same variable order $x_1<\cdots<x_n$, with matrix presentations $P=(A_1\cdots A_n)_{1,1}$ and $Q=(B_1\cdots B_n)_{1,1}$, where $A_i$ and $B_i$ are the matrices for the $x_i$-layer. For each $i$, define the $x_i$-layer of the product roABP to be $C_i=A_i\otimes B_i$. Then,
$
C_1\cdots C_n=(A_1\otimes B_1)\cdots(A_n\otimes B_n)=(A_1\cdots A_n)\otimes(B_1\cdots B_n)
$, where the rightmost equality is by basic tensor calculus.
Hence the output entry of the product roABP equals $(A_1\cdots A_n)_{1,1}(B_1\cdots B_n)_{1,1}=PQ$. If the widths of the two roABPs are $w_P$ and $w_Q$, then the matrices $C_i$ have dimension at most $w_Pw_Q$, so the width is at most $w_Pw_Q$.}. Sums of polynomially many such roABPs
can be computed by taking their direct sum (i.e., by putting them side by side in parallel), so the width increases by at most
a polynomial factor. Hence each coefficient
\[
\sum_j R_j Q_{i,j}
\]
in~\eqref{eq:8.31} has roABP width $\poly(w,|P_0|)$, and the coefficients $L_t$ retain
width at most $w$ in a variable order in which $\vx'$-variables precede both the $\vy$- and
$\vz$-variables. Therefore, $P'_0\cup P_1$ has an
$\roABP$-$\lIPS$ refutation of width $\poly(w,|P_0|)$ in this variable order.
\end{proof}

We are now ready to prove the feasible interpolation theorem.
\begin{theorem}[Feasible interpolation for roABP-$\lIPS$]\label{thm:feasible-interpolation}
    Let $P_0(\vx,\vz)$ and $P_1(\vy,\vz)$ be sets of polynomial equalities, where $\vx,\vy$ and $\vz$ are pairwise disjoint variables and $P_0$ is a set of polynomials of both degree and sparsity $\poly(|\vx|+|\vz|)$. Assume that $P_0$ is in $\vz$-normal form. Suppose that $P_0\cup P_1$ has a \textup{roABP}-$\lIPS$ refutation of width $w$ in a variable ordering where $\vx$ variables precede all other variables. Then there is a span program of size $\poly(w\cdot |P_0(\vx,\vz)|)$ that computes an interpolant for $P_0$ and $P_1$. Moreover, if all the polynomials in $P_0$ are in monotone $\vz$-normal form, then the span program is monotone. 
\end{theorem}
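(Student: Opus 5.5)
Fix a width-$w$ roABP-$\lIPS$ refutation
\[
\sum_{p\in P_0} a_p\, p+\sum_{q\in P_1} b_q\, q+\sum_{v} h_v\,(v^2-v)=1
\]
in an ordering where all $\vx$-variables come first. We group the Boolean axioms for $\vx$-variables with $P_0$ and those for $\vy,\vz$ with $P_1$; both groups are in $\vz$-normal form, with $p'=0$. By the roABP decomposition stated in the preliminaries, we cut each $a_p$ right after the last $\vx$-layer. This gives $a_p=\sum_{i\in[w]} a_{p,i}(\vx)\,r_{p,i}(\vy,\vz)$. The only objects we keep are the $\vx$-polynomials $a_{p,i}$, giving at most $w\cdot|P_0|$ of them, plus $|\vx|$ more for the Boolean axioms.

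The span program lives in the vector space $V=\F[\vx]$ (or its finite-dimensional subspace spanned by everything below), with target $t=1$. For each $p\in P_0$ of the form $p_0(\vx)+z_j p_1(\vx)$ and each $i\in[w]$, we add the pairs $(\bar z_j,\, a_{p,i}p_0)$ and $(z_j,\, a_{p,i}(p_0+p_1))$. These are exactly the values of $a_{p,i}\,p(\vx,\valpha)$ when $\alpha_j=0$ and $\alpha_j=1$ respectively. For each $p=p_0(\vx)$ with no $\vz$-variable, we add $(1,\,a_{p,i}p_0)$. To handle Boolean axioms of $\vx$ robustly, we also add $(1,\, m\,(x_k^2-x_k))$ for $m$ ranging over a spanning set. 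A simpler route is to work in the quotient space $\F[\vx]/(x_k^2-x_k)_k$ of multilinear polynomials, which removes these axioms altogether. I will take the quotient. The size is $O(w|P_0|)$.

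Correctness splits into two directions. If the program accepts on $\valpha$, then $1$ is a linear combination of the polynomials $a_{p,i}p(\vx,\valpha)$ modulo the Boolean ideal. Every such polynomial vanishes at any Boolean root of $P_0(\vx,\valpha)$, so $P_0(\vx,\valpha)$ is unsatisfiable. Conversely, suppose $P_1(\vy,\valpha)$ has a Boolean solution $\vbeta$. We substitute $\vy=\vbeta,\vz=\valpha$ into the refutation; the $P_1$-terms and the $\vy,\vz$ Boolean axiom terms vanish. This leaves
\[
1=\sum_{p}\sum_i r_{p,i}(\vbeta,\valpha)\,a_{p,i}(\vx)\,p(\vx,\valpha)
\]
modulo $\vx$-Boolean axioms, with field coefficients $r_{p,i}(\vbeta,\valpha)$. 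This is precisely a linear combination of the selected vectors, so the program outputs $1$. Thus the program outputs $0$ only when $P_1(\vy,\valpha)$ is unsatisfiable, which is \eqref{eq:contrapositive-interpolation-polynomials}. The hidden subtlety is the Boolean-axiom bookkeeping, and I expect this to be the main place needing care: $\vx$-axiom coefficients $h_v$ are general, but the quotient trick avoids decomposing them. I should also verify that the substitution $\vz=\valpha$ commutes with the cut; it does, because $r_{p,i}$ depends only on $\vy,\vz$.

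For monotonicity, when $P_0$ is in monotone $\vz$-normal form, each $p$ is $z_j p_1$ or $p_0$. For $p=z_jp_1$, the $\bar z_j$-labelled vector is $0$ and can be dropped, so only positive literals and constants remain. The remark on eliminating constant labels then yields a genuine monotone span program with polynomial blowup.
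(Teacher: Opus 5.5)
Your proposal is correct and is essentially the paper's proof. It uses the same cut of each $a_p$ after the last $\vx$-layer, the same literal-labelled vectors $a_{p,i}\,p(\vx,\valpha)$ with target $1$, and the same two-direction correctness argument. Your explicit handling of the Boolean axioms (quotienting by the $\vx$-Boolean ideal, and letting the $\vy,\vz$ ones vanish under substitution) makes precise a point the paper leaves implicit. The quotient is not even needed: the coefficients of the $\vx$-Boolean axioms are width-$w$ roABPs in the same order and decompose through the cut like the $a_p$, so those axioms can simply be treated as $\vz$-free members of $P_0$.
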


\begin{proof}
    By \Cref{cor:normal-form-application} we can assume without loss of generality that $P_0$ is in a (monotone) normal form and let  
    \begin{equation}\label{given-refutation}
        \sum_{p\in P_0} a_p p + \sum_{q\in P_1}b_q q = 1
    \end{equation}
    be a refutation of $P_0\cup P_1$. 
    
    Note that, by the assumption that for every $p\in P_0$ the polynomial $a_p$ is computable by a width $w$ read-once oblivious algebraic branching program, it can be written in the form
  \begin{equation}\label{eq:decompose-ap}
        a_p= \sum_{i\in [w]}a_{p,i}(\vx)\cd r_i(\vy,\vz),
  \end{equation}
for some polynomials $a_{p,i}$ in the $\vx$ variables only and some polynomials $r_i$ in the $\vy$ and $\vz$ variables only. Indeed, consider the layer of the roABP immediately after the last $\vx$-variable in the given variable ordering (recall that the $i$-th layer in a roABP contains only edges labeled with univariate polynomials in the $i$-th variable in the variable linear ordering). Since the roABP has width $w$, this layer contains at most $w$ nodes. For each node $i \in [w]$, let $a_{p,i}$ denote the polynomial computed from the source to node $i$, and let $r_i$ denote the polynomial computed from node $i$ to the sink. Then $a_{p,i}$ depends only on the $\vx$-variables, while $r_i$ depends only on the remaining variables, namely $\vy,\vz$. Summing over all nodes in this layer gives the polynomial $a_p$. If the layer has fewer than $w$ nodes, we pad the decomposition by setting the missing $a_{p,i}$'s equal to $0$.
\medskip

We now construct the span program $I$ as follows: 

\begin{enumerate}
\item \label{it:z-labels}
For any polynomial $p\in P_0$ of the form $p' + z_j p''$ add to the span program the entries
    \[
    \left(z_j, a_{p,i}(p' + p'')\right)\text{ and }(1 - z_j, a_{p,i}p'),\text{ for all }i\in [w].
    \]
Thus, if $z_j=1$ the span program picks the vector $a_{p,i}(p' + z_jp'')=a_{p,i}(p' + p'')$; while if $z_j=0$ the span program picks the vector $a_{p,i}(p' + z_j p'')=a_{p,i}p'$. 
\item For any polynomial $p\in P_0$ of the form $p'$ (i.e., no $\vz$-variables appear in the polynomial)
add to the span program the entries
    \[
    (1, a_{p,i}p'), \text{ for all }i\in [w].
    \]

\item The target vector is the constant polynomial $1$. 
\end{enumerate}   
Note that if $P_0$ is monotone in the $\vz$-variables then this span program is monotone, because in this case $p\in P_0$\ is of the form $p'+z_j p''$ with $p'=0$, hence $(1 - z_j, a_{p,i}p') = (1 - z_j, 0)$ can be discarded from the span program.

\smallskip
   
We claim that the constructed span program computes an interpolant for $P_0$ and $P_1$, namely, the conditions in \eqref{eq:contrapositive-interpolation-polynomials}
 are met: if $I(\valpha)=1$ then $P_0(\vx,\valpha)$ is unsatisfiable; and if $I(\valpha)=0$ then $P_1(\vy,\valpha)$ is unsatisfiable.

We consider the two cases separately. 
Suppose first that the span program $I$ outputs $1$ on an input $\valpha$. The program outputs $1$, when the constant polynomial $1$ is in the linear span of the vectors selected by $\valpha$. Note that by \Cref{it:z-labels} above these vectors correspond to polynomials in $P_0(\vx,\valpha)$ multiplied by the polynomials $a_{p,i}$. Thus expressing $1$ as a linear span of these polynomials constitutes a Nullstellensatz refutation of $P_0(\vx,\alpha)$, and so $P_0(\vx,\alpha)$ is unsatisfiable.

    Otherwise, suppose  that the span program $I$ outputs $0$ on an input $\valpha$. Suppose towards a contradiction that  $P_1(\vy,\valpha)$ is  satisfiable and let $\vbeta$ be an assignment to the $\vy$ variables that satisfies $P_1(\vy,\valpha)$ (so $P_1$ vanishes under $\valpha,\vbeta$). Applying both $\valpha$ and $\vbeta$ to the given refutation \eqref{given-refutation} it simplifies to a refutation of $P_0(\vx,\valpha)$:
    \[
    \sum_{p\in P_0}a_p(\vx,\vbeta,\valpha)p(\vx,\valpha) = 1.
    \]
    Moreover, by \Cref{eq:decompose-ap}, for every $p\in P_0$ the polynomial $a_p(\vx,\vbeta,\valpha)$ is a linear combination of the polynomials $a_{p,i}(\vx)$, for $i\in [w]$:
\begin{equation}\label{eq:crux}
        a_p(\vx,\vbeta,\valpha)= \sum_{i\in [w]}a_{p,i}(\vx)\cd r_i(\vbeta,\valpha).
  \end{equation}    
    The polynomials $a_p(\vx,\vbeta,\valpha)p(\vx,\valpha)$ are thus linear combinations of the vectors selected by $\valpha$, and the span program should have output $1$, concluding the proof.
    
        Note that \Cref{eq:crux} is the crucial equality allowing us to conclude the theorem: for this equality to hold we must be able to write the left hand side as a \emph{linear combination} of $a_{p,i}(\vx)$, for $i\in[x]$. This is only possible because the variables $\vy,\vz$ come  \emph{after} the \vx-variables in the variable ordering, so that the $r_i(\vbeta,\valpha)$ become constant.
\end{proof}

\begin{remark}
    The statement of \Cref{thm:feasible-interpolation} in the monotone case can be applied to a slightly more general class of formulas. This is done by observing that the monotone case in \cref{lem:normal-form} does not depend on sparsity in the sense that the monotone feasible interpolation in \Cref{thm:feasible-interpolation} is applicable   for every monotone $P_0$ so that the polynomials in it are computable by low-width roABPs.
\end{remark}

%-----------------------------------------------------
\subsection{Application: Lower Bounds for Fixed Variable Ordering}
%----------------------------------------------------
\label{sec:fixed-order}

In this section we use the monotone feasible interpolation theorem of the previous section to obtain lower bounds against roABP-$\lIPS$. For this we will require the following lower bounds for monotone span programs.

\begin{definition}[$\GEN_n$ function hard for monotone span programs \cite{PitassiRobere2018}]
\label{def:GEN}
For a positive integer $n$, let $T \subseteq [n]^3$ be a set of triples. We say that $T$ \emph{generates} a point $w \in [n]$ if either $w=1$, or there exists a triple $(u,v,w)\in T$ such that $T$ generates both $u$ and $v$. The function $\GEN_n$ takes as input the characteristic vector of a set $T \subseteq [n]^3$ and outputs $1$ if and only if $T$ generates the point $n$. 
\end{definition}

Equivalently, starting from the initially generated point 1, repeatedly add a point $w$ whenever there is a triple $(u,v,w)\in T$ such that both $u$ and $v$ have already been generated. Then $\GEN_n(T)=1$ if and only if this process eventually generates $n$.

\begin{theorem}[\cite{robere_exponential_2016,PitassiRobere2018}]\label{thm:spanprog-lb}
  The monotone function $\GEN_n$ is computable in polynomial time. Furthermore,  over any field any monotone span program computing $f$ requires size $2^{n^{\Omega(1)}}$.
\end{theorem}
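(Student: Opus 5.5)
Theorem~\ref{thm:spanprog-lb} is cited from \cite{robere_exponential_2016,PitassiRobere2018} rather than proved here. The plan below only indicates how its two parts are established and how they fit this paper.

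\emph{Polynomial-time computability.} This part is elementary. Given the characteristic vector of $T\subseteq[n]^3$, I would run the closure process described after Definition~\ref{def:GEN}. Start from the generated set $\{1\}$. In each round, scan all $n^3$ triples and add $w$ whenever some $(u,v,w)\in T$ has $u$ and $v$ already generated. The generated set grows in every round until it stabilises, so there are at most $n$ rounds. The total running time is therefore $O(n^4)$, and we output $1$ exactly when $n$ ends up generated. Monotonicity is also immediate: enlarging $T$ can only enlarge the closure.

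\emph{Span program lower bound.} Here I would follow the algebraic tiling / rank method of Robere, Pitassi, Rossman and Cook. The key steps are as follows.
\begin{itemize}
\item Use the characterisation of Gál: a monotone span program of size $s$ computing $f$ exists only if $s\ge \operatorname{rank}_\F(A)/\max_{i}\operatorname{rank}_\F(A\circ R_i)$, where $A$ is any matrix indexed by pairs (maxterm, minterm) and $R_i$ is the rectangle of pairs separated by variable $i$.
\item Lift a Nullstellensatz degree lower bound to such a pattern matrix via the Sherali--Adams / pattern-matrix technique. The degree bound needed is one for the search problem associated with $\GEN$ restricted to pyramid-structured instances, and it is $n^{\Omega(1)}$ by Buresh-Oppenheim et al.
\item Use a lifting gadget (Sherstov's pattern matrix, or inner product) so that the rank of $A$ is exponential in the degree, while every $A\circ R_i$ has rank at most $2^{O(\text{degree}\cdot\log)}$ smaller.
\end{itemize}
This yields $2^{n^{\Omega(1)}}$ over every field, since the Nullstellensatz degree bound holds over every field.

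\emph{Main obstacle.} The hard step is the lifting theorem, which converts Nullstellensatz degree into a rank separation. Since the result is already in the literature, I would simply invoke it. The only thing to check is that the lifted pyramid $\GEN$ instance is a monotone restriction (projection) of $\GEN_m$ for polynomially related $m$. Under such a projection a span program for $\GEN_m$ yields one of no larger size for the lifted instance, so the lower bound transfers to $\GEN_m$.
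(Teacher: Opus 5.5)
Your proposal is correct and follows the paper's approach. The paper also does not reprove the span-program lower bound; it simply invokes \cite{robere_exponential_2016,PitassiRobere2018}, and it justifies polynomial-time computability, as you do, by iteratively closing $\{1\}$ under the triples of $T$. Your remark that the cited hard function is a lifted pyramid variant, so the bound for $\GEN_n$ itself needs that function to be a monotone projection of $\GEN_m$ with $m$ polynomially related, is a step the paper leaves implicit. One small correction to your sketch: the ``over every field'' part comes from Pitassi--Robere's field-independent degree-to-rank lifting, not from Sherstov's pattern-matrix method, which only gives rank over the reals.
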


The following lemma gives a way to transform monotone functions computable by small circuits into monotone propositional formulas describing the yes and no instances of the function respectively. A similar construction appeared in \cite{robere_exponential_2016} starting from a monotone circuit. The lemma below is a non-uniform analogue of the general construction of an unsatisfiable split formula from a disjoint $\NP$ pair \cite{Kra97-Interpolation}. We include the proof for completeness, and to note that the circuit itself need not be monotone as long as the function is.

\begin{lemma}\label{lem:function-to-CNF}
    For any monotone function $f\in \NP/\poly\cap\coNP/\poly$ there are polynomial sized $3$-CNF formulas  $\phi_0(\vx,\vz)$ and $\phi_1(\vy,\vz)$ so that 
    \begin{itemize}
        \item $f(\valpha) = 0$ if and only if $\phi_0(\vx,\valpha)$ is satisfiable;
        \item $f(\valpha) = 1$ if and only if $\phi_1(\vy,\valpha)$ is satisfiable;
        \item the $\vz$-variables appear in $\phi_0(\vx,\vz)$ only negatively and in $\phi_1(\vy,\vz)$ only positively.
    \end{itemize}
\end{lemma}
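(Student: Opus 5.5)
Since $f\in\NP/\poly$, there is a polynomial-size (non-monotone) circuit family $V_1(\vz,\vy')$ such that $f(\valpha)=1$ iff $\exists\vy'\,V_1(\valpha,\vy')=1$. Since $f\in\coNP/\poly$, there is a circuit $V_0(\vz,\vx')$ such that $f(\valpha)=0$ iff $\exists\vx'\,V_0(\valpha,\vx')=1$. Applying the standard Tseitin transformation to these circuits gives $3$-CNFs $\psi_1(\vy',\vz,\vg_1)$ and $\psi_0(\vx',\vz,\vg_0)$ expressing ``the gate variables are consistent with the circuit and the output gate is $1$''. These are satisfiable for fixed $\valpha$ exactly when the corresponding circuit accepts for some witness. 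The $\vz$-variables, however, may occur with both signs. The remaining work is to enforce the sign conditions, and for this we use monotonicity of $f$.

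For $\phi_1$, we introduce fresh copies $\vz'$ as existential variables inside $\vy=(\vy',\vg_1,\vz')$. We use $\psi_1(\vy',\vz',\vg_1)$ with $\vz$ replaced by $\vz'$, and add the $2$-clauses $(\neg z'_i\vee z_i)$, i.e.\ $z'_i\to z_i$. In this formula $\vz$ occurs only positively.

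Correctness of $\phi_1$: if $f(\valpha)=1$, take $\vz'=\valpha$. Conversely, a satisfying assignment yields $\vz'\le\valpha$ with $f(\vz')=1$, and monotonicity gives $f(\valpha)=1$.

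For $\phi_0$, we symmetrically take fresh $\vz''$ and use $\psi_0$ on $\vz''$. We add the clauses $(z''_i\vee\neg z_i)$, i.e.\ $z_i\to z''_i$, so $\vz$ occurs only negatively.

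Correctness of $\phi_0$: if $f(\valpha)=0$, take $\vz''=\valpha$. Conversely, a satisfying assignment yields $\vz''\ge\valpha$ with $f(\vz'')=0$, and monotonicity gives $f(\valpha)=0$.

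Both formulas have polynomial size because the circuits do. Clauses of width at most $2$ can be padded, or simply regarded as $3$-CNF clauses of width $\le 3$. Finally, $\phi_0\wedge\phi_1$ is unsatisfiable because no $\valpha$ satisfies both $f(\valpha)=0$ and $f(\valpha)=1$.

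The main point needing care is the direction ``satisfiable $\Rightarrow$ correct value''. This is exactly where monotonicity is used: the copy variables only guarantee that some comparable input has the right value. Everything else is the routine Tseitin encoding, and no monotone circuit for $f$ is ever needed.
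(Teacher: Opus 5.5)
Your proposal is correct and is essentially the paper's argument. The paper also Tseitin-encodes the nondeterministic circuits for $\neg f$ and $f$, and treats the input-gate variables as copies of $\vz$. It links them by the clauses $\neg z_i\vee x_g$ (copy $\geq\valpha$) and $z_i\vee\neg y_g$ (copy $\leq\valpha$), and uses monotonicity to pass the value from the comparable copy back to $\valpha$. Your fresh copies $\vz',\vz''$ do the same job as the paper's rewiring to a single input gate per $\vz$-variable.
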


\begin{proof}
    We construct first the formula $\phi_0$. For this let $C(\vz,\vw)$ be a polynomial size nondeterministic circuit computing the negation of $f$, where $\vz$ are the input variables and $\vw$ the nondeterministic variables; its existence is guaranteed by the fact that $f\in\coNP/\poly$. To define $\phi_0$ introduce for each gate $g$ of the circuit $C$ a new variable $x_g$, and construct $\phi_0$ as follows:
    \begin{itemize}
        \item if $g$ is a $\circ$\kern 0.09em-gate, where $\circ\in\{\wedge,\vee\}$ with children $g_0$ and $g_1$, include in $\phi_0$ the clauses encoding the following formula  $x_g\leftrightarrow (x_{g_0}\circ x_{g_1});$ 
        \item if $g$ is a $\neg$\kern 0.09em-gate with a child $g_0$, include in $\phi_0$ the clauses encoding the formula $x_g\leftrightarrow\neg x_{g_0}$;
        \item for the output gate $g$, include in $\phi_0$ the clause $x_{g}$;
        \item finally for each input gate $g$ labeled with a variable $z_i$, include in $\phi_0$ the clause $\neg z_i\vee x_g$.
    \end{itemize}
    The constructed CNF satisfies the last condition. We will prove the first condition. Suppose first that $f(\alpha) = 0$. Then there is some assignment $\beta$ to the $\vw$ variables so that $C(\alpha,\beta) = 1$. We can satisfy $\phi_0(\vx,\alpha)$ by assigning the input gate variables $x_g$ according to the assignments $\alpha$ and $\beta$ and the intermediate variables so that they respect the required conditions. Suppose then that $\phi(\vx,\alpha)$ is satisfiable, and let $\gamma$ be the assignment to the gate variables corresponding to gates labeled with $\vz$ variables. By rewiring we may assume that there is a single input gate labeled with any $\vz$ variables, and thus $|\gamma| = |\alpha|$. As $\gamma$ is a part of a satisfying assignment of $\phi_0(\vx,\alpha)$ we have that $f(\gamma) = 0$. Also $\alpha\leq\gamma$ as $\phi_0$ contains the clauses $\neg z_i\vee x_g$. Hence $f(\alpha)\leq f(\gamma) = 0$ as $f$ is monotone.

    The formula $\phi_1(\vy,\vz)$ is constructed similarly using the nondeterministic circuit for $f$ itself with the modification that for input gate $g$ labeled with variable $z_i$, we include in $\phi_1$ the clause $z_i\vee\neg y_g$.
\end{proof}

With \cref{thm:feasible-interpolation} and the existing monotone span program lower bounds of \cite{robere_exponential_2016,PitassiRobere2018} we are ready to prove the main theorem of this section.

\begin{theorem}\label{thm:fixed-order-lbs}
    There is an unsatisfiable $3$-CNF $\phi_0(\vx,\vy)\wedge\phi_1(\vy,\vz)$ that requires \textup{roABP}-$\lIPS$ refutations of width $2^{N^{\Omega(1)}}$ in any variable ordering where $\vx$ variables precede all the other variables, where $N$ is the number of variables in the formula.
\end{theorem}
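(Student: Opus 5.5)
Plan: combine \Cref{lem:function-to-CNF}, \Cref{cor:normal-form-application}, \Cref{thm:feasible-interpolation} and \Cref{thm:spanprog-lb}. (The statement's variable names $\phi_0(\vx,\vy)$ should read $\phi_0(\vx,\vz)$; the hard order is the one with the $\vx$-variables first.)

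First build the formula. By \Cref{thm:spanprog-lb}, $\GEN_n$ is computable in polynomial time, so it lies in $\P/\poly\subseteq\NP/\poly\cap\coNP/\poly$. It is monotone, since adding triples can only add generated points. Apply \Cref{lem:function-to-CNF} to $f=\GEN_n$ on the $n^3$ input variables $\vz$. This gives polynomial-size $3$-CNFs $\phi_0(\vx,\vz)$ and $\phi_1(\vy,\vz)$ with $\phi_0(\vx,\valpha)$ satisfiable iff $\GEN_n(\valpha)=0$, and $\phi_1(\vy,\valpha)$ satisfiable iff $\GEN_n(\valpha)=1$. The conjunction is therefore unsatisfiable. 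Its number of variables $N$ is polynomial in $n$, so $2^{n^{\Omega(1)}}=2^{N^{\Omega(1)}}$. Any interpolant for this pair equals $\GEN_n$ on every input, since every $\valpha$ satisfies exactly one side.

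Second, check monotonicity of the translation of $\phi_0$. In $\phi_0$ the $\vz$-variables occur only negatively, and the translation of a negative literal $\neg z_i$ contributes a factor $z_i$ to $\tr(C)$. Hence every clause polynomial has the form $(\prod_{i\in I}z_i)\cdot p(\vx)$, which is monotone in $\vz$ in the sense of \eqref{eq:monotone}. The Boolean axioms $x_j^2-x_j$ go to $P_0$ and contain no $\vz$. The axioms $z_i^2-z_i$ and $y_j^2-y_j$ go to $P_1$, which is allowed since $P_1$ may contain $\vz$. Each polynomial in $P_0$ has constant degree and sparsity at most $8$, so the hypotheses of \Cref{cor:normal-form-application} hold.

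Third, the contrapositive. Suppose there is a width-$w$ roABP-$\lIPS$ refutation in an order with $\vx$ first. \Cref{cor:normal-form-application} turns it into a refutation of $P_0'\cup P_1$ of width $\poly(w,|P_0|)$, with $\vx'$ first and $P'_0$ in monotone $\vz$-normal form, and with the same interpolation problem. \Cref{thm:feasible-interpolation} then yields a monotone span program of size $\poly(w,N)$ computing an interpolant, which by the first step is exactly $\GEN_n$. By \Cref{thm:spanprog-lb} this size is at least $2^{n^{\Omega(1)}}$, so $w\ge 2^{N^{\Omega(1)}}$.

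The main obstacle is the bookkeeping in step two. The monotone case needs the $P_0$-side polynomials to be monotone in $\vz$, which requires the $\vz$-occurrences in $\phi_0$ to be negative; \Cref{lem:function-to-CNF} guarantees exactly this. I will also check that the monotone span program computes $\GEN_n$ rather than its negation. In the construction, acceptance certifies that $P_0(\vx,\valpha)$ is unsatisfiable, i.e.\ $\GEN_n(\valpha)=1$, so it computes $\GEN_n$ itself. Finally, the constant-$1$ labels are removed by the remark in the span-program preliminaries, at polynomial cost.
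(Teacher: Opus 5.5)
Your proposal is correct and follows the paper's proof essentially step for step. You build $\phi_0,\phi_1$ from $\GEN_n$ via \Cref{lem:function-to-CNF}, apply monotone feasible interpolation (\Cref{thm:feasible-interpolation}) to get a $\poly(n\cdot w)$-size monotone span program computing $\GEN_n$, and conclude with \Cref{thm:spanprog-lb}. The differences are cosmetic: the paper notes directly that the translation of $\phi_0$ is already in monotone $\vz$-normal form, since each clause contains at most one $\vz$-literal, namely $\neg z_i\vee x_g$, so your detour through \Cref{cor:normal-form-application} is harmless but unnecessary. Your explicit treatment of the $\vz$- and $\vy$-Boolean axioms and of the constant-$1$ labels fills in details the paper leaves implicit.
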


\begin{proof}
    Consider the function $\GEN_n$, and note that it is indeed computable in polynomial time by iteratively closing the initially generated set $\{1\}$ under the triples in $T$. In particular, it belongs to $\NP/\poly \cap \coNP/\poly$. Thus, let $\phi_0(\vx,\vz)$ and $\phi_1(\vy,\vz)$ be the formulas encoding the no and yes instances of the function $\GEN_n$, respectively, given by \cref{lem:function-to-CNF}.
    The obtained $3$-CNF has $\poly(n)$ variables and $\poly(n)$ clauses, and all the $\vz$ variables appear only negatively in $\phi_0(\vx,\vz)$.
     
    Suppose there is an roABP-$\lIPS$ refutation of $\phi_0(\vx,\vz)\wedge\phi_1(\vy,\vz)$ of width $w$ in some variable ordering where $\vz$ variables precede the $\vy$ and $\vz$ variables. Note that the standard polynomial translation of $\phi_0(\vx,\vz)$ is in monotone $\vz$-normal form. Thus, by \cref{thm:feasible-interpolation} there is a size $\poly(n\cd w)$ monotone span program computing an interpolant $I$ for $\phi_0(\vx,\vz)\wedge\phi_1(\vy,\vz)$.

    To finish the proof it suffices to note that the computed interpolant is exactly the function $\GEN_n$. 
    % on $n$ input bits. 
    To see this note that the interpolant $I$ satisfies
    \[
I(\valpha) =
\begin{cases}
    0, \text{ if }\phi_0(\vx,\valpha)\text{ is satisfiable;}\\
    1, \text{ if }\phi_1(\vy,\valpha)\text{ is satisfiable;}
\end{cases}
\]
and so by the construction of $\phi_0(\vx,\vz)$ and $\phi_1(\vy,\vz)$ it satisfies
    \[
I(\valpha) =
\begin{cases}
    0, \text{ if } \GEN_n(\valpha) = 0\\
    1, \text{ if } \GEN_n(\valpha) = 1.
\end{cases}
\]
Thus the function has size $\poly(n\cd w)$ monotone span programs, and so, by \Cref{thm:spanprog-lb} we have that
$w\geq 2^{n^{\Omega(1)}}$. As $N = \poly(n)$ it follows that $w\geq 2^{N^{\Omega(1)}}$.
\end{proof}

\section{Lower Bounds for Arbitrary Variable Ordering}
\label{sec:any-order-roABP}

In this section, we build a single family of unsatisfiable CNF formulas, such that each CNF  requires large roABP-$\lIPS$ refutations in every variable ordering. A natural strategy that has been employed in such settings is to use the hard instance defined in the previous section, which we showed is hard for a fixed variable ordering, and then \emph{lift} that instance to obtain a lower bound for all variable orderings. 
 
\para{Background on lifting in algebraic proof complexity.} The first roABP-$\lIPS$ lower bound for all variable order was proved in~\cite{FSTW21} for an algebraic instance built in the subset sum. In their work, they first prove that the following unsatisfiable algebraic formula  (a polynomial equation) is hard to refute by an roABP-$\lIPS$ refutation in a fixed variable ordering, namely $\vu < \vv$. 
$$f(\vu, \vv):  \sum_{i \in [n]} u_iv_i - \beta = 0,$$
where $\beta \notin \{0,1,\ldots, n\}$. Next, they construct another hard instance, using two new sets of variables, namely $\vz = \{z_{i,j}\}$ for ${i,j \in [{2n}]}$ and $\{x_i\}$ for ${i \in [{2n}]}$. Specifically, they create the following instance:
$$g(\vz, \vx): \sum_{i<j} z_{i,j} x_i x_j\ - \beta =0,$$
where $\beta \notin \{0,1, \ldots, {n \choose 2}\}$. 
The main idea is that for any order $\sigma$ of the $\vx$ variables, there exists a Boolean assignment to the $\vz$ variables and a projection of $\vx$ variables to $\vu$ and $\vv$ variables that recovers the original hard instance. Hence, if the original instance can be obtained as a projection of the new instance, then the lower bound proved for the original instance carries over to the new instance. They call $g(\vz,\vx)$ as the \emph{lifted} instance of $f(\vu, \vv)$. This lifting idea is used by other $\IPS$ lower bounds such as~\cite{GHT22, HLT24, EGLT26}.

In all the above works, the input instances were purely algebraic and not a CNF formula. 

\smallskip 

It is worth noting that the question of proving lower bounds for all variable orders naturally occurs in the context of OBDD-based proof systems, which are  propositional proof systems operating with ordered binary decision diagrams (introduced by Atserias, Kolaitis and Vardi~\cite{AKV04}). This is addressed in the works of Kraj\'{i}\v{c}ek~\cite{Kra2007}, Segerlind~\cite{Seg2007}, and a recent work of~\cite{BIKRS21}. While there are some similarities between our proof strategy and that of~\cite{Kra2007}, the arguments are different.

\subsection{The Hard Instance}

Here we describe our construction of the hard instance.
We start by fixing some notation.

Let us call  the hard 3-CNF formula from the previous section $\phi = \phi_0 \wedge \phi_1$. From this formula, we will build a new hard instance, which we call $\Psi$. Let us denote the variables of the original instance $\phi$ by $\vx = \{x_1, x_2, \ldots, x_N\}$ and assume that it is hard in the variable ordering $x_1 < \ldots < x_N$. Let the clauses in $\phi$ be denoted by $C_1, C_2, \ldots, C_M$. A specific clause $C$ in this formula is a conjunction of at most three literals, i.e., $C = (\ell_{1} \vee \ell_{2} \vee \ell_{3})$. Moreover, for a literal $\ell$ in $C$, let $\var(\ell)$ denote the variable corresponding to the literal $\ell$ and $\sign(\ell)$ denote the sign of that literal. 
\bigskip

The hard instance $\Psi$ we construct has two sets of variables denoted $\vu$ and $\vv$, where $|\vv|=N$ and $|\vu|=O(M \cd N^3)$.
Our construction guarantees the following properties. 

\begin{itemize}
    \item $\Psi(\vu,\vv) = \Phi_1(\vu,\vv) \wedge \Phi_2(\vu)$, where $\Phi_1$ is a CNF over the variables $\vu, \vv$ and $\Phi_2$ is a CNF over $\vu$ variables.
    \item Additionally, $\Psi(\vu,\vv)$ is unsatisfiable. Specifically, for every Boolean assignment $\valpha$ to $\vu$ exactly one of the following two holds:
    \begin{itemize}
        \item either $\Phi_2(\valpha)$ is not satisfied;
        \item or $\Phi_2(\valpha)$ is satisfied and there exists a relabelling of $\vv$ variables by $\vx$ variables such that $\Phi_1(\valpha,\vv)$ reduces to $\phi$, the hard instance for the fixed variable order (hence, $\Phi_1(\valpha,\vv)$ is unsatisfiable). 
    \end{itemize}
\end{itemize}

Overall, let $\phi$ be the hard instance from Section~\ref{sec:fixed-order},
which is hard in the variable order $\sigma$. The construction ensures that
$\Psi$ is unsatisfiable and that, for every ordering of the variables $\vu,\vv$,
there is a Boolean assignment to the $\vu$-variables and a relabelling of the
$\vv$-variables by the $\vx$-variables that produces $\phi$ in the order
$\sigma$. This yields a lower bound for all variable orders. Based on this idea,
we construct $\Psi$ in two stages.

\para{Stage 1: local realisation of the clauses.}
 This stage should be viewed as \emph{a lift} of the original instance.
We define the variables $\vu,\vv$ and the first part $\Phi_1$ of the
lifted formula. There are $N$ variables $\vv=\{v_1,\ldots,v_N\}$, which will
serve as renamed copies of the original variables $\vx=\{x_1,\ldots,x_N\}$.
The role of the $\vu$-variables is to specify, clause by clause, how a clause
of the original formula $\phi$ is realised on the $\vv$-variables.

For every clause
\[
C=(\ell_1\vee \ell_2\vee \ell_3)
\]
of $\phi$, and for every triple of distinct indices $i,j,k\in[N]$, we
introduce a selector variable $u_{C,i,j,k}$. Hence, there are $O(M \cd N^3)$-many $\vu$ variables.
Intuitively, setting
$u_{C,i,j,k}=1$ means that, in the lifted copy of the clause $C$, the variables
underlying the literals $\ell_1,\ell_2,\ell_3$ are represented by
$v_i,v_j,v_k$, respectively. We enforce this local copy of $C$ by adding the
clause
\begin{equation}
\neg u_{C,i,j,k}\vee
v_i^{\sign(\ell_1)}\vee
v_j^{\sign(\ell_2)}\vee
v_k^{\sign(\ell_3)} .
\label{eq:13.1}
\end{equation}
Thus, whenever the selector $u_{C,i,j,k}$ is set to $1$, the formula contains
the clause obtained from $C$ by replacing
$\var(\ell_1),\var(\ell_2),\var(\ell_3)$ with $
v_i,v_j,v_k$, respectively, while preserving the signs of the literals.

\begin{definition}
    The formula $\Phi_1$ is the conjunction
    of all clauses of the form~\eqref{eq:13.1}, for all clauses $C$ in $\phi$ and all triples $i\neq j\neq k\in[N]$.
\end{definition}

At this stage, different clauses may choose
their triples independently, so the same original variable might be represented
by different $\vv$-variables in different clauses, or two different original
variables might be represented by the same $\vv$-variable. The role of the
second stage will be to add a consistency formula $\Phi_2$ over the
$\vu$-variables, forcing the selected local realisations to come from one
global relabelling of the original variables by the $\vv$-variables. 

\para{Stage 2: enforcing global consistency.}
The purpose of the second part $\Phi_2$ of the lifted formula is to ensure that the local choices made by the selector variables
$\vu$ in Stage~1 are compatible with a single global relabelling of the
original variables $\vx$ by the new variables $\vv$.

Recall that, for every clause
\[
C=(\ell_1\vee \ell_2\vee \ell_3)
\]
and every triple of distinct indices $i,j,k\in[N]$, the variable
$u_{C,i,j,k}$ says that the variables underlying the literals
$\ell_1,\ell_2,\ell_3$ are realised by $v_i,v_j,v_k$, respectively. Thus, if
$u_{C,i,j,k}=1$, it defines the partial map
\[
\pi_{C,i,j,k}
=
\{\,v_i\mapsto \var(\ell_1),\;
    v_j\mapsto \var(\ell_2),\;
    v_k\mapsto \var(\ell_3)\,\}.
\]
We say that two selector variables $u_{C,i,j,k}$ and $u_{D,a,b,c}$ are
compatible if the union of their corresponding partial maps is still a
one-to-one partial map from $\vv$ to $\vx$. Equivalently, they are incompatible
if either the same $\vv$-variable is mapped to two different $\vx$-variables, or
two different $\vv$-variables are mapped to the same $\vx$-variable.

The formula $\Phi_2$ enforces the following two conditions.
\begin{enumerate}
    \item \label{it:1371} For every clause $C$ of $\varphi$, exactly one of the variables
    $u_{C,i,j,k}$ is set to $1$. We add the clauses
    \[
        \bigvee_{i,j,k\in[N]\text{ distinct}} u_{C,i,j,k}
    \]
    and, for every two distinct triples $(i,j,k)\neq(i',j',k')$,
    \[
        \neg u_{C,i,j,k}\vee \neg u_{C,i',j',k'} .
    \]

    \item \label{it:1372} We rule out incompatible local choices. Namely, for every two clauses
    $C,D$ of $\phi$, and every two triples of distinct indices $(i,j,k)$ and
    $(a,b,c)$, if the two partial maps
    \[
        \pi_{C,i,j,k}
        \qquad\text{and}\qquad
        \pi_{D,a,b,c}
    \]
    are incompatible, then we add the clause
    \[
        \neg u_{C,i,j,k}\vee \neg u_{D,a,b,c}.
    \]
\end{enumerate}

\begin{definition}
    Let $\Phi_2$ be the conjunction of all clauses introduced above in \Cref{it:1371,it:1372}
\end{definition}

Thus, $\Phi_2$ has size polynomial in $N$ and $M$: the number of selector
variables is at most $MN^3$, and the consistency clauses are obtained by
checking pairs of selector variables.

By construction, an assignment to the $\vu$-variables satisfies $\Phi_2$ if
and only if, for every clause $C$, it chooses exactly one local realisation of $C$, and all chosen local realisations are mutually compatible. Equivalently, the selected partial maps combine into one global one-to-one partial map from the relevant $\vv$-variables to the original variables $\vx$. Thus, whenever $\Phi_2$ is satisfied, the clauses selected in $\Phi_1$ form a relabelled copy of the original formula $\phi$. More formally, we have the following. 

\begin{proposition}
    The CNF $\Psi(\vu,\vv) = \Phi_1(\vu,\vv) \wedge \Phi_2(\vu)$ is unsatisfiable.
\end{proposition}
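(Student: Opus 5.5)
Suppose, towards a contradiction, that $(\valpha,\vgamma)$ is a Boolean assignment to $(\vu,\vv)$ satisfying $\Psi$. The plan is to turn $\valpha$ into a global relabelling and pull $\vgamma$ back to a satisfying assignment of $\phi$, contradicting the unsatisfiability of $\phi=\phi_0\wedge\phi_1$ (which holds by \cref{lem:function-to-CNF}: $\phi_0(\vx,\valpha)$ and $\phi_1(\vy,\valpha)$ cannot both be satisfiable, since $\GEN_n(\valpha)$ cannot be both $0$ and $1$).

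\emph{Step 1: a global relabelling.} Since $\Phi_2(\valpha)=1$, the clauses of \Cref{it:1371} give, for every clause $C$ of $\phi$, a unique triple $(i_C,j_C,k_C)$ with $u_{C,i_C,j_C,k_C}=1$. Let $\pi=\bigcup_C \pi_{C,i_C,j_C,k_C}$. The clauses of \Cref{it:1372} ensure that any two chosen partial maps are compatible, so $\pi$ is a well-defined, one-to-one partial map from $\vv$ to $\vx$. Its domain contains every variable occurring in $\phi$, since each such variable occurs in some clause $C$ and is then in the image of $\pi_{C,i_C,j_C,k_C}$. Let $\pi^{-1}$ be the inverse map, sending $x\mapsto v$.

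\emph{Step 2: the pulled-back assignment satisfies $\phi$.} Define $\vbeta(x):=\vgamma(\pi^{-1}(x))$ for every $x$ in the image of $\pi$, and assign the remaining $\vx$-variables arbitrarily. Fix a clause $C=(\ell_1\vee\ell_2\vee\ell_3)$. The instance of \eqref{eq:13.1} for $u_{C,i_C,j_C,k_C}$ is satisfied and its $\neg u$ literal is false. Hence $\vgamma$ satisfies one of the literals $v_{i_C}^{\sign(\ell_1)},v_{j_C}^{\sign(\ell_2)},v_{k_C}^{\sign(\ell_3)}$. Because $\pi^{-1}(\var(\ell_1))=v_{i_C}$ and similarly for the other two, $\vbeta$ satisfies the corresponding literal $\ell_t$, so $\vbeta$ satisfies $C$. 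Therefore $\vbeta$ satisfies $\phi$, a contradiction.

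\emph{Main obstacle.} The delicate point is well-definedness in Step 1. The same original variable may occur in several clauses, and also in different positions of a single clause, and we need $\pi^{-1}$ to send it to one $\vv$-variable consistently. Cross-clause consistency is exactly what the incompatibility clauses of \Cref{it:1372} give, provided they are read as also covering $C=D$ with distinct triples. Within one clause, consistency requires that a clause of $\phi$ never repeats a variable; I would assume this without loss of generality, since duplicate literals can be merged and tautological clauses removed. Clauses with fewer than three literals should be handled by selectors over the correspondingly shorter tuples.
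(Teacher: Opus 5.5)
Your proof is correct and is essentially the paper's argument: $\Phi_2$ forces a single consistent one-to-one relabelling $\pi$, under which the selected clauses of $\Phi_1$ form a relabelled copy of $\phi$, so unsatisfiability transfers from $\phi$; you just make explicit, via the pull-back $\bar\beta=\bar\gamma\circ\pi^{-1}$, what the paper compresses into ``relabelled copy''. Two small remarks: in Step~1 it is the image of $\pi$, not its domain, that contains every variable of $\phi$, and your caveat about $C=D$ with distinct triples is unnecessary, since the exactly-one clauses already select only one triple per clause.
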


\begin{proof}
Let $\valpha$ be
an arbitrary Boolean assignment to the $\vu$-variables. If $\valpha$ does not
satisfy $\Phi_2$, then clearly $\valpha$ has no extension to the $\vv$-variables
that satisfies $\Psi$. Thus assume that $\valpha$ satisfies $\Phi_2$.

By the definition of $\Phi_2$, for every clause $C$ of $\phi$ there is a
unique triple $(i,j,k)$ such that $\valpha(u_{C,i,j,k})=1$, and all the partial
maps selected in this way are mutually compatible. In other words, these partial maps combine into a single one-to-one partial map from the relevant $\vv$-variables to the original variables $\vx$. Equivalently, there is a single one-to-one partial map $\pi$ from the
$\vv$-variables to the original variables $\vx$ such that, for every clause
$C=(\ell_1\vee \ell_2\vee \ell_3)$ of $\phi$, if $u_{C,i,j,k}$ is the selector that is set to $1$, then
$\pi(v_i)=\var(\ell_1)$, $\pi(v_j)=\var(\ell_2)$, and
$\pi(v_k)=\var(\ell_3)$.
 
Now restrict $\Phi_1$ by the assignment $\valpha$. Let
$C=(\ell_1\vee \ell_2\vee \ell_3)$ be a clause of $\phi$, and let
$(i,j,k)$ be the unique triple selected for $C$. All clauses of $\Phi_1$
corresponding to unselected triples are satisfied, since their selector
variable is set to $0$. The selected triple leaves the clause
\[
v_i^{\sign(\ell_1)}\vee v_j^{\sign(\ell_2)}\vee v_k^{\sign(\ell_3)} .
\]
By the global compatibility of the selected partial maps, these remaining
clauses are exactly a relabelled copy of the clauses of $\phi$, with signs
preserved. Since $\phi$ is unsatisfiable, no assignment to the
$\vv$-variables satisfies $\Phi_1(\valpha,\vv)$. Hence no extension of
$\valpha$ satisfies $\Psi$. Since $\valpha$ was arbitrary, $\Psi$ is
unsatisfiable.
\end{proof}

\subsection{Lower Bound against roABP-$\lIPS$ for All Variable Orders}

We are now ready to prove the main lower bounds result. 
\begin{theorem}[Main lower bound]\label{thm:main-final-lower-bound}
    The CNF $\Psi(\vu,\vv) = \Phi_1(\vu,\vv) \wedge \Phi_2(\vu)$   requires \textup{roABP}-$\lIPS$ refutations of width $2^{N^{\Omega(1)}}$ in any variable ordering.
\end{theorem}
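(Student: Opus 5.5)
The plan is a restriction argument. Take an arbitrary roABP-$\lIPS$ refutation of (the standard translation of) $\Psi$ of width $w$ in an arbitrary ordering $<$ of $\vu\cup\vv$. From it I will produce a refutation of $\phi$ of width at most $w$ in an ordering where the $\vx$-variables of $\phi_0$ precede all others. Then \Cref{thm:fixed-order-lbs} gives $w\geq 2^{N^{\Omega(1)}}$.

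\textbf{Step 1: choose the relabelling.} Let $v_{i_1}<\cdots<v_{i_N}$ be the order that $<$ induces on $\vv$. I will interpret the hard order $\sigma$ of \Cref{thm:fixed-order-lbs} as a fixed linear order on the variables of $\phi$ in which the $\vx$-variables of $\phi_0$ come first. Rename these variables $x_1<\cdots<x_N$ in the order $\sigma$, and let $\pi$ be the bijection $v_{i_k}\mapsto x_k$.

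\textbf{Step 2: define the assignment $\valpha$ to $\vu$.} For each clause $C=(\ell_1\vee\ell_2\vee\ell_3)$ of $\phi$, set $u_{C,i,j,k}=1$ exactly when $\pi(v_i)=\var(\ell_1)$, $\pi(v_j)=\var(\ell_2)$ and $\pi(v_k)=\var(\ell_3)$, and set all other selectors to $0$. Clauses with fewer than three literals, or with repeated variables, can be handled by padding the construction in the obvious way; I will assume $\phi$ is written so that each clause has three distinct variables.

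Since all selected partial maps are restrictions of the single bijection $\pi$, they are pairwise compatible and exactly one selector per clause is on. Hence $\valpha$ satisfies every clause of $\Phi_2$. The translated polynomial $\tr(D)$ of each clause $D$ of $\Phi_2$ therefore evaluates to the constant $0$ under $\valpha$, as do the Boolean axioms $u^2-u$.

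In $\Phi_1$, a clause with selector $0$ translates to a polynomial containing the factor $u_{C,i,j,k}$, so it becomes $0$. A clause with selector $1$ becomes, after renaming by $\pi$, exactly $\tr(C)$. The Boolean axioms $v^2-v$ become $x^2-x$.

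\textbf{Step 3: restrict the refutation.} Substitute $\valpha$ into the refutation $\sum_D g_D\,\tr(D)+\sum h_j(\cdot)=1$ and rename via $\pi$. All terms whose axiom became $0$ vanish. What remains is a linear combination of the translated clauses of $\phi$ and the Boolean axioms, equal to $1$: a $\lIPS$ refutation of $\phi$.

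Each coefficient polynomial arises from a width-$w$ roABP by fixing some layers to constants. Fixing a layer multiplies constant matrices into a neighbouring layer, and renaming variables keeps the relative order of the surviving layers. So each coefficient has a width-$\le w$ roABP in the order on $\vx$ induced by $<$ on $\vv$ through $\pi$, which is $x_1<\cdots<x_N$, that is, $\sigma$.

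When several terms of the restricted refutation carry the same axiom of $\phi$, their coefficients must be summed. This could enlarge the width by a factor equal to the number of contributing terms. To avoid this, I will note that for each clause $C$ of $\phi$ only the one selected $\Phi_1$-clause survives, so no summing is needed there. For each $x_k$, only $v_{i_k}^2-v_{i_k}$ maps to $x_k^2-x_k$, so no summing is needed there either. The width therefore stays at most $w$.

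\textbf{Step 4: conclude.} By \Cref{thm:fixed-order-lbs}, $w\geq 2^{N^{\Omega(1)}}$. The number of variables of $\Psi$ is $\poly(N)$, so the bound is also $2^{|\Psi|^{\Omega(1)}}$ in terms of the size of $\Psi$.

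The main obstacle I expect is matching orders precisely. \Cref{thm:fixed-order-lbs} requires only that the $\vx$-variables of $\phi_0$ precede the others, and $\pi$ can be chosen so the first $|\vx|$ positions of the $\vv$-order go to the $\vx$-variables of $\phi_0$. I will check that the induced order after relabelling is indeed of that form, and that restriction by constants preserves roABP width without reordering.
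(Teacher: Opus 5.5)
Your proposal is correct and follows essentially the same restriction argument as the paper. You fix the order induced on $\vv$ and relabel $v_{i_r}\mapsto x_r$. You turn on exactly the selectors consistent with this single bijection, which satisfies $\Phi_2$ and reduces $\Phi_1$ to $\phi$. You then note that substituting constants and renaming variables does not increase roABP width, so the fixed-order lower bound applies.

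Your extra bookkeeping makes explicit some details the paper leaves implicit. These are: each axiom of $\phi$, including each Boolean axiom, is the image of exactly one surviving axiom of $\Psi$, so no coefficients need summing; the hard order is chosen with the $\phi_0$-variables first; and clauses with fewer than three literals are padded.
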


 \begin{proof}
Let $<$ be an arbitrary variable ordering of
the variables of $\Psi$, and suppose that $\Psi$ has an
$\roABP$-$\lIPS$ refutation of width $w$ in this order. Recall that $N$ is the number of $\vv$-variables, which is the number of variables in the original hard instance $\phi=\phi_0\land\phi_1$ denoted $x_1,\dots,x_N$. Consider the order induced by $<$ on the $\vv$-variables, and write it as 
\[
v_{i_1}<v_{i_2}<\cdots<v_{i_N}.
\]
We define a restriction depending only on this induced order. First relabel
$v_{i_r}$ as $x_r$, for every $r\in[N]$. Next assign the $\vu$-variables as
follows. For every clause $C=(\ell_1\vee \ell_2\vee \ell_3)$ of $\phi$ (in 
the $\vx$-variables) 
set to $1$ the unique $u_{C,i,j,k}$ for which $v_i,v_j,v_k$ are the
$\vv$-variables relabelled as $\var(\ell_1),\var(\ell_2),\var(\ell_3)$,
respectively. Set all other variables associated with $C$ to $0$.

This assignment to the $\vu$-variables satisfies $\Phi_2$: it selects 
exactly one triple for each clause, and all selected triples arise from the 
single global relabelling $v_{i_r}\mapsto x_r$. Therefore all compatibility 
clauses in $\Phi_2$ are satisfied. Under the same restriction, every clause of $\Phi_1$ whose selector
$u_{C,i,j,k}$ is assigned $0$ is satisfied by the literal $\neg u_{C,i,j,k}$.
For each original clause $C=(\ell_1\vee\ell_2\vee\ell_3)$ of $\phi$, exactly
one variables $u_{C,i,j,k}$ is assigned $1$, and the corresponding clause \eqref{eq:13.1} of $\Phi_1$ reduces to
\[
v_i^{\sign(\ell_1)}\vee v_j^{\sign(\ell_2)}\vee v_k^{\sign(\ell_3)}.
\]
By the definition of the assignment to the $\vu$-variables and the relabelling
$v_{i_r}\mapsto x_r$, this reduced clause is precisely the original clause
$C$. Hence the restriction of $\Phi_1$ is exactly $\phi$.
Thus the restricted formula is 
precisely $\phi$, with its variables
ordered as $x_1<\cdots<x_N$.

Applying this restriction to the assumed $\roABP$-$\lIPS$ refutation of
$\Psi$ gives an $\roABP$-$\lIPS$ refutation of $\phi$ in the order
$x_1<\cdots<x_N$. The width does not increase under substituting constants
for the $\vu$-variables and relabelling the remaining $\vv$-variables: in the roABP, this only substitutes constants into edge labels and renames 
variables, while preserving the layered computation.

Hence $\phi$ has an $\roABP$-$\lIPS$ refutation of width at most $w$
in the hard order $x_1<\cdots<x_N$. By Theorem~\ref{thm:fixed-order-lbs},
we get $w\geq 2^{N^{\Omega(1)}}$. Since the original order $<$ was arbitrary,
the same lower bound holds for every variable ordering.
\end{proof}
    
%----------------------------------------------------

\section{Simulations and Upper Bounds}
\label{sec:sim-and-upper-bound}
Here we provide information on the strength of roABP-$\lIPS$. We show a simulation of a (dynamic) tree-like  PC refutation systems in which each proof-line is written as a roABP in a fixed global order. Then we provide polynomial-size upper bounds for Tseitin and the functional pigeonhole principle.

%----------------------------------------------------
\subsection{roABP-IPS$_{\mathbf{LIN}}$ Simulates Tree-Like Polynomial Calculus Length}
%----------------------------------------------------

In this section we show that roABP-$\lIPS$ simulates the tree-like Polynomial Calculus as long as the initial polynomial equalities can be written as small-width roABPs. The simulation shows that for simple axioms (computable by constant width roABPs) a length $\ell$ Polynomial Calculus refutation can be transformed into an roABP-$\lIPS$ refutation of width $O(\ell)$. This result holds irrespective of the way the intermediate polynomials are represented in the tree-like Polynomial Calculus refutation as long as they are derived using the two local rules of the system.

\begin{proposition}
    Let $p_1 = 0,\ldots,p_m = 0$ be an unsatisfiable set of polynomial equalities, and suppose each polynomial $p_i$ is computable by width $w$ roABP in some fixed variable ordering.
    
    If there is a tree-like Polynomial Calculus refutation of $p_1 = 0,\ldots, p_m = 0$ of length $l$, then there is an $\mathrm{roABP}$-$\lIPS$ refutation of $p_1 = 0,\ldots, p_m = 0$ of width at most $\ell\cdot w$.
\end{proposition}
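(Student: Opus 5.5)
The plan is to convert the tree-like refutation into a Nullstellensatz-style certificate by tracking, for every line, how it is expressed as a linear combination of the axioms. The tree-like structure will keep the coefficient polynomials small.

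For each line $q_t$ of the refutation we maintain coefficients $c_{t,1},\dots,c_{t,m}$ with $q_t=\sum_i c_{t,i}p_i$. The key claim, proved by induction along the derivation, is that every $c_{t,i}$ is a sum of at most $\ell_t$ products. Here $\ell_t$ is the number of axiom leaves in the subtree rooted at $q_t$. Each product has the form $\lambda\cdot m$, where $\lambda$ is a field element and $m$ is a monomial in the variables (possibly $m=1$).

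The induction has three cases. If $q_t$ is an axiom $p_j$, set $c_{t,j}=1$ and all other coefficients to $0$. If $q_t=xq_s$, set $c_{t,i}=x\,c_{s,i}$; multiplying every term by $x$ keeps the number of scaled monomials unchanged. If $q_t=aq_s+bq_r$, set $c_{t,i}=a\,c_{s,i}+b\,c_{r,i}$; the term counts add, and so do the leaf counts $\ell_s+\ell_r$. Because the proof is tree-like, each line is used at most once, so these counts really are additive rather than compounding through shared subproofs. In fact each leaf occurrence contributes exactly one scaled monomial, namely the product of the multiplication steps along its path to the root together with the field scalars along that path. So $c_{\mathrm{root},i}$ is a sum, over the leaves labelled $p_i$, of one scaled monomial per leaf. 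The number of leaves is at most the length $\ell$.

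The final step bounds the width in the fixed ordering. A scaled monomial has a width-$1$ roABP in any variable ordering, since each layer carries just the appropriate power of its variable. A sum of at most $\ell$ such monomials has width at most $\ell$, by putting the width-$1$ programs in parallel, as in the proof of \Cref{lem:normal-form}. This gives the identity $1=\sum_i c_{\mathrm{root},i}p_i$, with each coefficient of width at most $\ell$, which is within the claimed bound $\ell\cdot w$.

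The width $w$ of the axioms would enter only if one measured the products $c_ip_i$, whose width is at most $\ell w$ by the tensor-product bound in the footnote to \Cref{cor:normal-form-application}. I would state the result in that form so that it covers both conventions.

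The main obstacle is making rigorous the notion of ``tree-like'' with repeated axiom occurrences, so that the leaf count, not the DAG size, bounds the number of terms. I would handle this by unfolding the derivation into a tree whose size is at most $\ell$ by assumption.
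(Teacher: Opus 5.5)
Your proof is correct and follows essentially the same route as the paper: induct along the derivation tree, carry each line's representation as a combination of the axioms, and use tree-likeness so that the bounds add across the two premises of a linear-combination step rather than compounding. Your invariant (one scaled monomial per axiom leaf) is slightly sharper than the paper's bound of $s_i\cdot w$: it shows the coefficient roABPs have width at most $\ell$ regardless of $w$, and the paper's $\ell\cdot w$ is recovered only when you measure the products $c_ip_i$.
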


\begin{proof}
    Let $q_1,\ldots,q_\ell$ be a tree-like Polynomial Calculus refutation of $p_1 = 0,\ldots,p_m = 0.$ Denote by $s_i$ the size of the sub-tree of the derivation tree rooted at $q_i$. We prove by induction that $q_i$ has an roABP-$\lIPS$ derivation of width at most $s_i\cdot w$ for each $i\in [\ell]$.

    If $q_i = p_j$ for some $j\in [m]$, the claim is clear, since $p_j$ can be computed by width $w$ roABP. If $q_i = xq_j$ for some $j < i$ and some variable $x$, the claim is also clear, since if $q_j$ has an roABP-$\lIPS$ derivation of width $w'$, then $q_i$ has also, since multiplying the derivation by a single variable does not increase its width. Finally, suppose that $q_i = aq_j + bq_k$ for some $j,k <i$ and $a,b\in\F$. By induction assumption, $q_j$ and $q_k$ have roABP-$\lIPS$ derivations of width at most $s_j\cdot w$ and $s_k\cdot w$ respectively. Hence $q_i$ has an roABP-$\lIPS$ derivation of width at most $(s_j + s_k)\cdot w$, which is less than $s_i\cdot w$, since the Polynomial Calculus refutation is tree-like.
\end{proof}

%----------------------------------------------------
\subsection{Upper Bounds for Tseitin Formulas and the Functional Pigeonhole Principle}
%----------------------------------------------------

Lastly, we discuss upper bounds for the proof system roABP-$\lIPS$. We will observe that Tseitin formulas have polynomial width roABP-$\lIPS$ refutations in any variable ordering and show that functional pigeonhole principle has polynomial width roABP-$\lIPS$ refutations in a particular ordering of the involved variables.

In algebraic proof complexity, the restriction to Boolean domain is often imposed by including the Boolean axioms $x^2 - x$ for all variables as is done also by . This forces the variables to take values from the set $\{0,1\}$. This is not however the only possible choice; another common choice is to represent the Boolean values by $\{\pm 1\}$ by including the axioms $x^2 - 1$ for all variables. Although there is a simple linear transformation between these different representation certain complexity measures are sensitive to the choice of representation. For an example the sparsity, or monomial-size, measure can blow-up in this transformation; Tseitin formulas have polynomial sparsity Nullstellensatz refutations in the $\{\pm 1\}$ basis \cite{Gri98}, but over $\{0,1\}$ require exponential sparsity even in the stronger Sum-of-Squares proof system \cite{AH19}. 

The roABP-$\lIPS$ refutations on the other hand are not sensitive for this choice of basis: the linear transformation is applied individually for each variable and thus does not increase the width of the underlying roABPs. This observation shows that the lower bounds presented in this paper in fact give lower bounds over any representation of the Boolean values. Conversely, any upper bounds translate between the different Boolean bases. 

Recall the Tseitin formulas $\Ts(G,\nu)$ \cite{Tse68}. A \emph{charge} on an undirected graph $G$ is mapping $\nu$ that assigns a value $\nu(u)\in \{0,1\}$ to each vertex $u\in V(G)$. The charge is odd, if the number of vertices with charge $1$ is odd. Tseitin's formulas are defined over set of variables $x_e$ for each edge $e\in E(V)$ and consists of the parity formulas
\[
\bigoplus_{\substack{e\in E(V): \\u\in e}} x_e = \nu(u)
\]
for each $u\in V(G)$. The set of formulas is unsatisfiable for any graph with an odd charge. If the graph is of constant degree, the parity formulas can be written as small-width CNFs. Observe, however, that the formulas can always be written with small-width roABPs in any variable ordering. This observation allows us to state the following proposition.

\begin{proposition}
    For any graph $G$ with an odd charge $\nu$ the Tseitin formula $\Ts(G,\nu)$ has a width $\poly(n)$ roABP-$\lIPS$ refutation in any variable ordering.
\end{proposition}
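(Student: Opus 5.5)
We follow the route sketched in the introduction: first refute Tseitin in the $\{\pm1\}$ basis with Grigoriev's sparse Nullstellensatz certificate, then transfer that certificate to the $\{0,1\}$ basis by a variable-wise affine substitution. The substitution preserves roABP width in every variable ordering.

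\emph{Step 1: the $\{\pm1\}$ refutation.} Encode each edge variable as $y_e \in \{\pm 1\}$ with Boolean axioms $y_e^2 - 1$. The parity constraint at a vertex $u$ then becomes the single polynomial
\[
f_u := \prod_{e \ni u} y_e - (-1)^{\nu(u)} .
\]
Multiply over all vertices. Every edge occurs in exactly two vertex products, and the total charge is odd, so $\prod_u (-1)^{\nu(u)} = -1$. Telescoping gives
\[
\prod_u \prod_{e\ni u} y_e \;-\; \prod_u (-1)^{\nu(u)} \;=\; \sum_{k} \Bigl(\prod_{u<k} \textstyle\prod_{e\ni u} y_e\Bigr) \cdot f_{u_k} \cdot (-1)^{\sum_{u>k}\nu(u)} .
\]
On the left, $\prod_u \prod_{e \ni u} y_e = \prod_e y_e^2$, and $\prod_e y_e^2 - 1$ is derived from the axioms $y_e^2-1$ by a second telescoping sum. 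Combining the two identities yields $2 = \sum_k g_k f_{u_k} + \sum_e h_e (y_e^2 - 1)$, and dividing by $2$ gives a refutation. This requires characteristic $\neq 2$; over characteristic $2$ one can instead sum the linear parity equations directly, which is a trivial refutation. Every coefficient $g_k, h_e$ is a single monomial times a constant, so it has width $1$ in every ordering.

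\emph{Step 2: handling the CNF axioms.} The actual axioms are the CNF translations $\tr(C)$ of the parity constraints, but the identity above uses the $f_u$. So I must derive each $f_u$ (after the substitution below) from the clauses of vertex $u$ and the Boolean axioms, with small roABP width. The plan uses the arithmetization of $\bigoplus_{e\ni u} x_e \neq \nu(u)$ as a sum of indicator products $\prod_{e}(x_e \text{ or } 1-x_e)$ over all falsifying assignments. Each indicator is the translation of one clause of the CNF. Hence the arithmetization is a sum of clause translations with constant coefficients, modulo multilinearization. For the multilinearization step I will assume bounded degree, in which case this all costs constant width. For unbounded degree, the parity constraint has no polynomial-size CNF anyway, so I will treat the axiom as the polynomial $f_u$ itself, which has width $2$ in any order. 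I expect this step, pinning down which encoding of the axioms is intended, to be the main obstacle.

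\emph{Step 3: changing basis.} Substitute $y_e = 1 - 2x_e$. This maps $f_u$ to a polynomial that vanishes exactly on the assignments satisfying the parity constraint at $u$. It maps $y_e^2 - 1$ to $4(x_e^2 - x_e)$. Since the substitution is univariate and per variable, it replaces each edge label of an roABP by another univariate polynomial in the same variable, so the width does not increase in any ordering. Finally, the coefficients are sums of at most $\poly(n)$ width-$1$ monomials, plus the derivations from Step 2 composed by the product and sum closure used in Corollary~\ref{cor:normal-form-application}. This gives total width $\poly(n)$ in every variable ordering.
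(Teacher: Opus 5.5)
Your proposal is correct and follows the paper's route. You build a sparse Nullstellensatz refutation in the $\{\pm 1\}$ basis (your telescoping certificate is a concrete instance of the Grigoriev-style refutation the paper cites, with width-one coefficients) and then apply a per-variable affine substitution, which preserves roABP width in every ordering.

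Step~2 is easier than you expect, and here you are more careful than the paper. After substituting $y_e=1-2x_e$, both $\prod_{e\ni u}(1-2x_e)-(-1)^{\nu(u)}$ and the sum of the clause translations at $u$ are multilinear and agree on $\{0,1\}$ up to the factor $-2(-1)^{\nu(u)}$. So they are equal as polynomials and no multilinearization step is needed. Your separate characteristic-$2$ argument also covers a case the paper's proof (which divides by $2$) passes over.
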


\begin{proof}
    Grigoriev has shown that Tseitin formulas have polynomial sparsity Nullstellensatz refutation over the $\{\pm 1\}$ basis. The sparse polynomials in the refutation can be written as low-width roABPs in any variable ordering, one path for each monomial. After applying the linear transformation $x\mapsto (1 - x)/2$ to each variable, we are left with a low-width roABP-$\lIPS$ refutation of the Tseitin formula in the $\{0,1\}$ basis. 
\end{proof}

Secondly, we will show that the functional pigeonhole principle can be refuted in small width roABP-$\lIPS$ in a particular variable ordering. Recall that the propositional encoding $\FPHP^{n+1}_n$ of the functional pigeonhole principle consists of the following clauses:
\begin{itemize}
    \item $\bigvee_{k\in [n]}x_{ik}$ for all $i\in [n+1]$;\hfill (pigeon axioms)
    \item $\neg x_{ik}\vee \neg x_{jk}$ for all distinct $i,j\in [n+1]$ and all $k\in [n]$;\hfill (pigeonhole axioms)
    \item $\neg x_{ik}\vee \neg x_{i\ell}$ for all $i\in [n+1]$ and all distinct $k,\ell\in [n]$.\hfill (functionality axioms)
\end{itemize}

We consider the variable ordering where variables are ordered by the pigeonhole: $x_{ik} < x_{j\ell}$, whenever $k < \ell$. The argument given below is similar to the upper bound given for constant-depth IPS in \cite{HLT24}, which in turn borrowed from the upper bounds in \cite{GH03} and \cite{RT07}.

\begin{proposition}
    $\FPHP^{n+1}_n$ has width $\poly(n)$ roABP-$\lIPS$ refutations in the variable ordering by the pigeonhole.
\end{proposition}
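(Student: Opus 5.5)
The plan is to use a \emph{counting by holes} certificate that avoids dividing by integers, so that it works over every field. It is built in the ordering of the statement: $x_{ik}<x_{j\ell}$ whenever $k<\ell$, so the variables form contiguous blocks $B_1<\dots<B_n$ with $B_k=\{x_{1k},\dots,x_{n+1,k}\}$ (the order inside a block is arbitrary). Write $\vx$ for all $n(n+1)$ variables. For $0\le t\le n(n+1)$ let
\[
T_t:=\sum_{|A|=t}\ \prod_{x\in A}x\prod_{x\notin A}(1-x),
\]
the indicator that exactly $t$ variables are $1$. Over every field $\sum_t T_t=\prod_{x}(x+(1-x))=1$ as a formal identity. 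Each $T_t$ has an roABP of width $\le n(n+1)+1$ in \emph{any} order, via the usual dynamic program that stores the running count. It therefore suffices to show that every $T_t$ lies in the ideal of the axioms with polynomial-width roABP coefficients in the hole ordering; summing these derivations then yields $1$.

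\emph{Large counts, $t\ge n+1$.} Every set $A$ of size $t$ contains two variables $x_{ik},x_{jk}$ in a common hole. I will make the choice canonical: take the first block $B_k$ in the order that contains two elements of $A$, and the first two such elements in the block order. Then $\prod_A x\prod_{\bar A}(1-x)$ is the pigeonhole axiom $x_{ik}x_{jk}$ times the remaining factors. For a fixed axiom $x_{ik}x_{jk}$, the sum of these remaining factors is computed by a DP along the hole ordering whose state consists of the running count, a flag saying whether we are before, inside, or after the witnessing block, and, inside a block, whether one selected variable has already been seen. This gives width $\poly(n)$, and the construction is field-free.

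\emph{Small counts, $t\le n$.} For each $A$ with $|A|\le n$, either some pigeon $i$ has two selected holes, and the term is a multiple of the functionality axiom $x_{ik}x_{i\ell}$, or some pigeon has no selected hole, and then $\prod_{k}(1-x_{ik})$ divides the term, so it is a multiple of the pigeon axiom $\tr(\bigvee_k x_{ik})$. Regrouping these terms \emph{canonically} in the hole ordering is the main obstacle. A naive canonical choice, such as the least pigeon that violates something, forces the DP to remember which pigeons have already been used across blocks, which costs exponential width. My first attempt will avoid per-term regrouping. I will first derive, for each pigeon $i$, the polynomial $1-P_i$ with $P_i=\sum_k x_{ik}$, from the pigeon axiom plus functionality axioms, by the telescoping identity
\[
\prod_{k\le m}(1-x_{ik})=\prod_{k<m}(1-x_{ik})-x_{im}+x_{im}\Bigl(1-\prod_{k<m}(1-x_{ik})\Bigr).
\]
Expanding the last bracket telescopically as a sum of terms $x_{ij}\prod_{k<j}(1-x_{ik})$ shows that $x_{im}\bigl(1-\prod_{k<m}(1-x_{ik})\bigr)$ is a sum of multiples of the functionality axioms $x_{ij}x_{im}$, with product-form, width-$1$ coefficients. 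I will then work modulo the ideal $J$ generated by the Boolean and functionality axioms, in which each $P_i$ is Boolean. In $J$, $T_t$ should coincide with the indicator $T'_t$ that exactly $t$ of $P_1,\dots,P_{n+1}$ equal $1$. For $t\le n$ each monomial of $T'_t$ carries a factor $(1-P_i)$, which has just been derived.

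The delicate points are two: writing $T_t-T'_t\in J$, and writing $T'_t$ times the derived certificates, with coefficients of polynomial width \emph{in the hole ordering}. For this I intend to process the blocks $B_1,\dots,B_n$ in order, and to keep as DP state only the number of pigeons already placed, never their identities. This should work because the products of the $(1-P_i)$, restricted to a prefix of blocks, are symmetric in the pigeons and hence depend only on that count. If the symmetry argument fails, the fallback is to induct on holes: derive the hole-$m$ truncations $T_t^{(m)}$ (counting only the variables in $B_1,\dots,B_m$), using the recursion $T^{(m)}_t=\sum_{s}T^{(m-1)}_{t-s}\cdot[\text{exactly }s\text{ ones in }B_m]$. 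Terms with $s\ge2$ are killed by pigeonhole axioms, and the counting is done directly in the block ordering.
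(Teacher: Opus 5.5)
The gap is the small-count case, $t\le n$. This is where the pigeon axioms have to enter, so it is the heart of the proof, and you never specify the coefficients $c_i$ with $\sum_{t\le n}T_t\equiv\sum_i c_i(1-P_i)$.

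The parts that work:
\begin{itemize}
\item Your large-count case: every $A$ with $|A|\ge n+1$ has a canonical first collision in the hole order, and each pigeonhole axiom gets a count-plus-flags DP of width $\poly(n)$.
\item Your telescoping derivation of $1-P_i$, which is the same identity the paper uses.
\end{itemize}

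The routes you sketch for $t\le n$ do not work as stated. Read literally, $T'_t=\sum_{|S|=t}\prod_{i\in S}P_i\prod_{i\notin S}(1-P_i)$ has top homogeneous part $(-1)^{n+1-t}\binom{n+1}{t}\prod_i P_i$. The product $\prod_i P_i$ has rank $2^{n+1}$ across every cut between two holes: for distinct $S$, the prefix parts $\prod_{i\in S}\sum_{k\le m}x_{ik}$ have disjoint supports, and likewise the suffix parts. Since $T_t$ and all generators of $J$ have small width, $T_t-T'_t$ has \emph{no} small-width certificate in $J$ whenever $\binom{n+1}{t}\neq 0$ in $\F$. ``Symmetric in the pigeons, hence a function of the count'' holds only for functions on the points of $J$. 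It fails for the polynomials you would actually have to certify. Your fallback recursion only ever invokes pigeonhole axioms, so it never says where the pigeon axioms come in.

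If you do run a DP that remembers only the count, you cannot tell the $n+1-t$ empty pigeons apart. So you must give them equal weight, and that weight is forced to be $1/(n+1-t)$. This is the missing idea, and it is also the paper's: give \emph{every} pigeon the same coefficient. Two facts make it work:
\begin{itemize}
\item $\sum_i(1-P_i)=n+1-\sum_{i,k}x_{ik}$ holds exactly.
\item $T_t\bigl(\sum_{i,k}x_{ik}-t\bigr)$ vanishes on the cube and has width $O(n^2)$, so it has a width-$O(n^2)$ certificate from the Boolean axioms.
\end{itemize}
Hence for $t\le n$
\[
T_t=\tfrac{1}{n+1-t}\,T_t\sum_i(1-P_i)+\tfrac{1}{n+1-t}\,T_t\Bigl(\sum_{i,k}x_{ik}-t\Bigr),
\]
and you then substitute your derivation of $1-P_i$.

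This divides by $1,\dots,n+1$, so it needs characteristic $0$ or $>n+1$. Your claim of a field-free certificate is therefore not achieved. The paper does the same thing in different coordinates. It sums the pigeon axioms to $n+1-\sum_k y_k$ with $y_k=\sum_i x_{ik}$, which is Boolean modulo the hole and Boolean axioms. It then substitutes into the symmetric subset-sum refutation in $y_1<\dots<y_n$, which equally requires $n+1-s\neq0$ for $s\le n$. With this fix your argument becomes a valid variant of the paper's. It slices by total count and handles $t\ge n+1$ by canonical collisions, instead of going through $y_k^2-y_k$.
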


\begin{proof}
    First note that the pigeon axioms can be written equivalently in the form $1 - \sum_{k\in [n]}x_{ik}$
    and that this equivalence is easily derivable from the standard translation of the pigeon axioms and the functionality axioms as
    \[
    1 - \sum_{k\in [n]}x_{ik} = \prod_{k\in [n]}(1 - x_{ik}) - \sum_{k < \ell}x_{ik}x_{i\ell}\prod_{\ell' > \ell}(1 - x_{i\ell'}).
    \]
    The weights on the functionality axioms above have a constant width read-once oblivious algebraic branching programs. The sum of all these representations of the pigeon axioms equals
    \[
    n+1 - \sum_{i\in [n+1]}\sum_{k\in [n]}x_{ik}
    \]
    Define the following short-hand: denote by $y_k$ the polynomial $\sum_{i\in [n+1]} x_{ik}$. This polynomial is Boolean valued as $y_k^2 - y_k$ can be easily derivable from the hole axioms and the Boolean axioms. Now write the sum of all pigeon axioms in the form
    \[
    n+1 - \sum_{k\in [n]}y_k.
    \]
    This is an unsatisfiable subset sum instance in the $\vy$ variables; its unique multilinear refutation is symmetric and can thus be written as a linear combination of elementary symmetric polynomials (\cite{FSTW21} gives an explicit expression). As elementary symmetric polynomials have small width read-once oblivious algebraic branching programs, this subset sum instance has small width roABP-$\lIPS$ refutations in the $\vy$ variables in the variable ordering $y_1 < \ldots < y_n$.

    Finally apply the substitution $y_k \mapsto \sum_{i\in [n+1]} x_{ik}$ to the refutation of the subset sum instance to obtain a refutation of the functional pigeonhole principle. The width of the refutation stays small since the chosen variable ordering interleaves with the ordering on the $\vy$ variables: $x_{ik} < x_{j\ell}$ for any $k < \ell$.
\end{proof}

\small

\bibliographystyle{alpha}
\bibliography{reference}

\end{document}